\documentclass[conference,letterpaper]{IEEEtran}

\usepackage{style}
\usepackage[utf8]{inputenc} 
\usepackage[T1]{fontenc}
\usepackage{url}
\usepackage{ifthen}
\usepackage{cite}
\usepackage[cmex10]{amsmath} 
\usepackage{mathrsfs}
\usepackage{amssymb,bm}
\usepackage{mathtools}
\usepackage{xcolor}
\usepackage{tikz}
\usepackage{pgfplots}
\usepackage{subcaption}
\usepackage{microtype}
\usepackage{enumitem}
\usepackage{hyperref}

  {\par\medskip\noindent\textcolor{magenta}{\rule{\linewidth}{0.4pt}}\\%
   \textcolor{magenta}{\bfseries EDIT NOTE:}\ \textcolor{magenta}\bgroup}%
  {\egroup\par\noindent\textcolor{magenta}{\rule{\linewidth}{0.4pt}}\medskip}
  {\par\medskip\noindent\textcolor{teal}{\rule{\linewidth}{0.6pt}}\\%
   \textcolor{teal}{\bfseries SUGGESTED TEXT #1}\par\smallskip\noindent}%
  {\par\noindent\textcolor{teal}{\rule{\linewidth}{0.6pt}}\medskip}

\newif\iflongpaper

\longpapertrue

\begin{document}
\title{Lossy Compression via Sparse Regression Codes: Generalized Construction and Finite-length Bounds} 


\author{%
  \IEEEauthorblockN{Galen Reeves}
  \IEEEauthorblockA{ 
                    Duke University\\
                    Email: galen.reeves@duke.edu}
  \and
  \IEEEauthorblockN{Ramji Venkataramanan}
  \IEEEauthorblockA{ 
    University of Cambridge\\
    Email: rv285@cam.ac.uk}
}

\maketitle

\begin{abstract}
We study sparse regression codes (SPARCs) for lossy compression under simple greedy encoding rules, including both correlation-based and distance-based methods. We generalize the SPARC construction, and consider the class of \emph{additive orthogonal} regression codes, of which standard SPARCs are a special case.  For this class of codes, we derive nonasymptotic bounds on the squared-error distortion by tracking the evolution of the encoding residual across stages. Our results highlight the role of power allocation in controlling the distortion, allowing us to optimize the allocation based  on the  parameters of the code. The optimized allocation improves the finite-length compression performance of SPARCs, and our bounds provide distortion guarantees for lower complexity variants of SPARCs, like signed SPARCs and $K$-sparse SPARCs.
\end{abstract}

\section{Introduction} \label{sec:intro}

We consider efficient lossy compression under the squared error distortion criterion using sparse regression codes (SPARCs). 
SPARCs were proposed by Barron and Joseph \cite{JosephBarron_ML12} for communication over  additive white Gaussian noise channels, and have been proven to asymptotically achieve the channel capacity with various efficient decoders \cite{JosephBarron_Fast12, BarronC12,BarbKrz17,RushGV17}. For lossy compression, SPARCs have been shown to asymptotically achieve the optimal distortion-rate tradeoff for an i.i.d. Gaussian source with a simple successive cancellation encoder \cite{kontoyiannis2010sparse, RVGaussFeasible}. 
Moreover, with optimal (minimum-distance) encoding, they also achieve the optimal excess-distortion exponent \cite{RVsparcRD_ML18}. Despite these attractive asymptotic properties, at finite block-lengths there is a significant gap between the empirical distortion achieved by SPARCs with successive cancellation encoding  and the optimal distortion-rate tradeoff \cite{RVGaussFeasible}. We recall that for an i.i.d. Gaussian source with variance $\sigma^2$, the optimal distortion-rate function at rate $R$ nats/sample is $D^*(R) = \sigma^{2} e^{-2R}$ \cite{coverT12}.  Wu et al. \cite{wu2023_decimation} proposed an iterative soft-decision encoder for SPARCs based on Approximate Message Passing (AMP) with improved empirical performance at finite lengths (compared to successive cancellation),  but there are no theoretical guarantees on the achievable distortion.

Before outlining our contributions, we describe the general structure of a regression code, of which the sparse regression code is a special case. A regression code has a codebook of the form 
\begin{align}
     \{ A \beta \in \bbR^n \mid \beta \in \cB \},
     \label{eq:regcode_def}
\end{align}
where $A \in \reals^{n \times N}$ is a design matrix (typically chosen to have i.i.d. Gaussian entries), and $\cB \subset \reals^N$ is a set of structured coefficient vectors. Here $n$ is the code length, and the rate of the code is $\frac{\log \abs{\cB}}{n}$, where $\abs{\cB}$ denotes the cardinality of $\cB$. Unless otherwise mentioned, we use natural logarithms and measure rate in nats.

The representation in \eqref{eq:regcode_def} separates the roles of the code components: the structure of $\cB$ enables computationally efficient encoding and decoding, while the design matrix  $A$ imparts favorable geometric properties to the codebook.  In particular, when $A$ has i.i.d.\ Gaussian entries and $\cB$ is a suitably chosen set of sparse vectors, the resulting codewords are approximately isotropic, and the codebook behaves like a random $\ell_2$-covering of $\reals^n$. As a result, sparse regression codes are well-suited for both Gaussian source coding and channel coding problems. In the context of channel coding, the partitioned structure of the SPARC codebook facilitates combining them with standard outer codes \cite{greig2017techniques,EbertSPARCLDPC}, and also makes it adaptable to multi-user Gaussian channels \cite{venkataramanan19monograph, fengler2021SPARCS,amalladinne2022unsourced, hsieh2022near,yan2026capacityregionachievingSPARC}.

\begin{figure}
    \centering
    \includegraphics[width=0.98\linewidth]{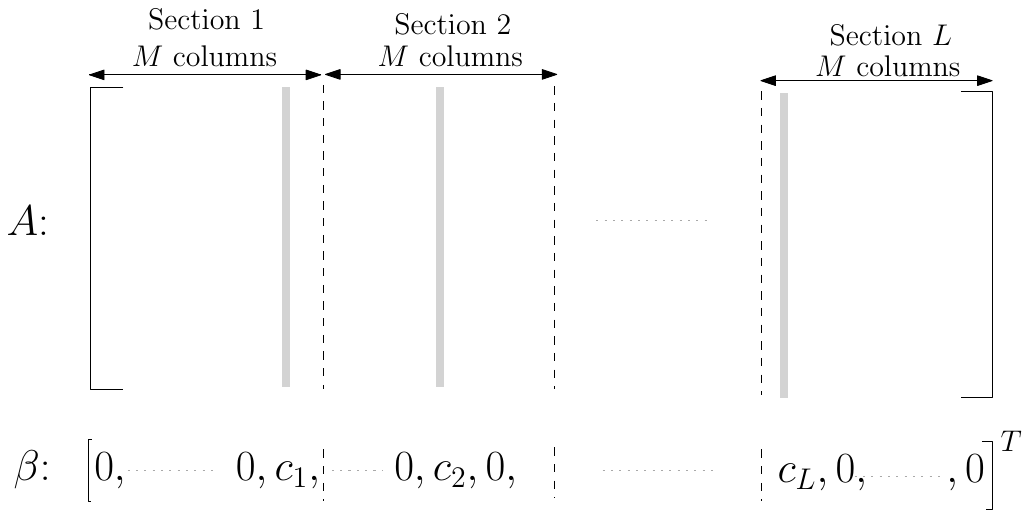}
    \caption{Structure of a standard SPARC. Codewords are of the form $A\beta$, where $\beta \in \reals^{ML}$ has exactly one nonzero component in each of its $L$ sections. }
    \label{fig:SPARC_fig}
\end{figure}

In the standard sparse regression construction \cite{JosephBarron_ML12,JosephBarron_Fast12, RVGaussFeasible}, the set $\cB$ consists of sparse vectors $\beta$, each with $L$ sections and exactly one non-zero per section. As shown in Fig. \ref{fig:SPARC_fig}, we write $N=ML$, so that each section has $M$ columns. The value of the nonzero component in section $l \in [L]$ is denoted $c_l$. These values  $\{ c_l \}_{l\in [L]}$ are fixed beforehand and known to both the encoder and the decoder. Since there are $M$ choices for the location of the nonzero in each of the $L$ sections, the number of codewords is $M^L$ and the rate is $\frac{L \log M}{n}$.

In this paper, we revisit sparse regression codes with successive cancellation (greedy) encoding. We study a general class of \emph{additive orthogonal} regression codes, of which SPARC is a special case. We characterize the performance under both correlation-based and distance-based greedy encoding rules, via deterministic predictions and nonasymptotic bounds on the distortion achieved with each of these methods. These predictions allow us to optimize the non-zero coefficients defining the code based on the finite-length parameters of the code. \iflongpaper (Borrowing terminology from channel coding, we call the choice of non-zero coefficients the ``power allocation".)\fi Our results show how the original SPARC construction can be generalized to improve finite-length performance, while maintaining the simplicity of greedy encoding.

\iflongpaper
The rest of the paper is organized as follows. Section~\ref{sec:additive} introduces additive orthogonal regression codes and the two greedy encoding rules. Section~\ref{sec:main_results} presents our main concentration and distortion-prediction results, together with the optimized power allocations for both encoding rules. Section~\ref{sec:experiments} reports numerical experiments validating the predictions, and Section~\ref{sec:discussion} discusses the implications and directions for future work.
\fi

\section{Additive Orthogonal Regression Codes} \label{sec:additive}
\iflongpaper
We now introduce a class of regression codes that is broad enough to include signed and $K$-sparse SPARCs alongside the standard construction, but structured enough to admit a common stage-wise encoder and a unified distortion analysis.
\fi

 We say that a regression code~\eqref{eq:regcode_def}  is additive if its  set of coefficient vectors $\cB$ admits a decomposition of the form
\begin{align}
\cB = \cB_1 + \cdots + \cB_L 
\coloneqq \left\{ \sum_{l=1}^L \beta_l \;\middle|\; \beta_l \in \cB_l \right\}, 
\quad \cB_l \subset \bbR^N. 
\label{eq:cB_additive} 
\end{align}
 Furthermore, in an \emph{additive orthogonal} regression code, the sets $\{\cB_l\}_{l=1}^L$ are mutually orthogonal (i.e., $\langle b, b' \rangle = 0$ for $b \in \cB_l$, $b' \in \cB_{l'}$ with $l \neq l'$), and each vector $b \in \cB_l$ satisfies $\|b \| =
 c_l$, for a constant $c_l >0$. This structure induces a natural stage-wise representation of codewords and leads to greedy decoding procedures that construct the coefficient vector incrementally, one component at a time. The rate of an additive orthogonal regression code is $\frac{1}{n} \sum_{l=1}^L \log \abs{\cB_l}$.  

SPARCs (Fig. \ref{fig:SPARC_fig}) are additive orthogonal  regression codes, with $N=ML$ and where each $\cB_l$ is a block of one-sparse vectors of the form 
\begin{align*}
\cB_l = \{ c_l e_j  \mid  (l-1) M < j   \le l M \},
\end{align*}
where $e_j$ is the canonical unit vector with $1$ in position $j$. Other examples of additive orthogonal regression codes include: 
\begin{itemize}[leftmargin=1.2em,itemsep=1pt, topsep=2pt, parsep=0pt]
    \item \emph{Signed} SPARCs \cite{hsieh2021modulated}, where the non-zero coefficient in section $l$ can be either $c_l$ or $-c_l$,
    \item \emph{$K$-sparse} SPARCs, which have $K$ nonzero coefficients in each section $l$, each with value $c_l/ \sqrt{K}$ (the standard SPARC  corresponds to $K=1$), 
\end{itemize}

With $L$ sections and $M$ columns per section, the rate of the standard SPARC is $\frac{L \log (M)}{n}$. The  signed SPARC  has rate $\frac{L \log (2M)}{n}$, and the $K$-sparse SPARC $\frac{L \log {M \choose K}}{n}$. Equivalently, for a fixed rate $R$, these SPARC variants enable smaller design matrices, and lower storage and encoding complexity.

\subsection{Greedy Encoding Rules} \label{subsec:greedy_rules}

Given a source vector $x \in \bbR^n$, the optimal encoder for a regression code determines a coefficient vector $\beta \in \mathcal{B}$ that minimizes the reconstruction error, i.e.,
\begin{align}
\min_{\beta \in \mathcal{B}} \| x - A \beta \|. \label{eq:decoding_obj}
\end{align}
Solving \eqref{eq:decoding_obj} exactly is generally computationally intractable. A simple practical alternative is to  construct 
$\beta$  incrementally using greedy or stage-wise procedures.

\paragraph*{Distance-based greedy encoding} 
A natural approach is to construct the coefficient vector sequentially by performing a minimum-distance update at each stage:\begin{subequations}
\label{eq:inc_residual}
\begin{align}
b_l &= \argmin_{b \in \cB_l} \| r_{l-1} - A b \|, \\
\beta_l &= \beta_{l-1} + b_l, \\
r_l &= x - A \beta_l,
\end{align}
\end{subequations}
where $\beta_0 = 0$ and $r_0 = x$. 
At each stage, $\beta_l$ can be viewed as an approximate solution to~\eqref{eq:decoding_obj} restricted to the partial coefficient set $\cB_1 + \cdots + \cB_l$.

\paragraph*{Correlation-based greedy encoding} 
An alternative is 
to use a correlation-based selection rule: \vspace{-.1cm}
\begin{subequations}
\label{eq:max_ip}
\begin{align}
b_l &= \argmax_{b \in \cB_l} \langle r_{l-1}, A b \rangle, \label{eq:bl_update_corr} \\
\beta_l &= \beta_{l-1} + b_l, \\
r_l &= x - A \beta_l.
\end{align}
\end{subequations}
Although the correlation-based  rule is generally suboptimal compared to the distance-based one~\eqref{eq:inc_residual}, it is computationally lighter and more amenable to analysis. For sparse regression codes with i.i.d.~Gaussian design matrices, correlation-based  encoding achieves the Gaussian rate-distortion function \cite{RVGaussFeasible}. 

The correlation-based encoder  has  a simple form for both $K$-sparse SPARCs and signed SPARCs. For $K$-sparse SPARCs, the maximization in \eqref{eq:bl_update_corr} is equivalent to choosing the $K$ largest values among  $  \langle A^\top r_{l-1}, e_j\rangle $ for $(l-1)M < j \le lM$. For signed SPARCs, the update rule is equivalent to choosing the largest among   $  \pm \langle A^\top r_{l-1}, e_j\rangle $ for $(l-1)M < j \le lM$. 

\section{Main Results} \label{sec:main_results}

We analyze both correlation-based and distance-based greedy encoding rules for additive orthogonal regression codes. Our analysis provides a deterministic prediction for the residual error across decoding stages and shows that the achieved distortion concentrates sharply around this prediction. 
\iflongpaper
All proofs are deferred to Appendix \ref{app:proofs_main_results}.
\else
Due to space constraints, the proofs are deferred to a longer version of the paper. \textcolor{red}{Add link to arxiv version} \fi

\begin{assumption}\label{ass:source_coding}
We assume throughout the paper that the design matrix $A \in \bbR^{n \times N}$ has i.i.d.\ $\normal(0, 1)$ entries, and is independent of the source vector $x \in \bbR^n$ (which can be deterministic or arbitrarily distributed).
\end{assumption}

\subsection{Concentration}

\begin{theorem}[Concentration]\label{thm:concentration}
Consider an additive orthogonal regression code with either distance-based or correlation-based encoding. Then, for each stage $l \in [L]$ there exists  a $1$-Lipschitz function $\mu_l  \colon \bbR_+ \to \bbR_+$ such that:
\begin{enumerate} 
\item  
The residual $r_l =x - A \beta_l$ at each stage satisfies
\begin{align*}
\E[ \|r_l\| \mid x ] = \mu_l(\|x\|) , \quad \forall x \in \bbR^n
\end{align*}
\item The centered  norm $\|r_l\| - \mu_l(\|x\|)$ is sub-Gaussian with variance proxy $\sum_{k=1}^l c_k^2$.  In particular, for all $t \ge 0$, 
    \begin{align*}
        \P\Big( \big| \|r_l\| - \mu_l(\|x\|) \big| \ge  t\Big) \le 2 \exp\left\{ - \frac{ t^2}{2 \sum_{k=1}^l c_k^2} \right\}.
    \end{align*}
\end{enumerate}
\end{theorem}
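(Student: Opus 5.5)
The plan is to condition on $x$ and regard $\|r_l\|$ as a function of the Gaussian matrix $A$ alone. Part 1 should come from rotational invariance, and part 2 from Gaussian concentration for Lipschitz functions. The one subtlety is that $\|r_l\|$ is not a Lipschitz function of $A$ in the plain Frobenius sense with the right constant. The fix is to use orthogonality of the $\cB_k$: the relevant directions $A b$ for $b\in\cB_k$ live in orthogonal "slices" of $A$.

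\textbf{Step 1 (dependence only on $\|x\|$).} For any orthogonal $Q\in\bbR^{n\times n}$, the matrix $QA$ has the same law as $A$. Both encoding rules are equivariant under this map. Replacing $(x,A)$ by $(Qx,QA)$ replaces every $r_{l-1}$ by $Qr_{l-1}$, because the objectives $\|r_{l-1}-Ab\|$ and $\langle r_{l-1},Ab\rangle$ are invariant; ties are broken by a fixed deterministic rule on $\cB_l$. Hence $\|r_l\|$ given $x$ has the same law as $\|r_l\|$ given $\|x\|e_1$. This defines $\mu_l(s)=\E\|r_l\|$ at source $s e_1$, which gives part 1.

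\textbf{Step 2 (Lipschitz property in $A$).} Define the map $F(A)=\|r_l\|=\|x-\sum_{k\le l}A b_k\|$. I will show $F$ is $\sigma$-Lipschitz in Frobenius norm with $\sigma^2=\sum_{k\le l}c_k^2$. Rather than tracking the discrete choices $b_k(A)$, which are discontinuous, I write $F$ via a min/max representation. For the distance rule, note that $\|r_l\|$ is not simply a min over $\cB_1+\dots+\cB_l$, so I will instead use an inductive argument. Let $g_k(A,r)=\min_{b\in\cB_k}\|r-Ab\|$. Then $\|r_k\|=g_k(A,r_{k-1})$ as a value. The value is 1-Lipschitz in $r$, and in $A$ it is $c_k$-Lipschitz in the direction $A\mapsto A+\Delta$ restricted to the column support of $\cB_k$. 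The problem is that $r_{k-1}$ as a vector, and not just its norm, depends discontinuously on $A$. So the workable route is different: concentration is not needed for $F$ globally Lipschitz, only for the \emph{value}, and the cleanest tool is to use Gaussian concentration sequentially, as a martingale over stages.

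\textbf{Step 3 (martingale over stages).} Let $P_k$ denote the orthogonal projection onto $\mathrm{span}(\cB_k)$, and let $A_k=AP_k$. These are independent Gaussian blocks by orthogonality of the $\cB_k$. The residual $r_{k-1}$ depends only on $x,A_1,\dots,A_{k-1}$. By Step 1 applied to the remaining stages, $\E[\|r_l\|\mid x,A_{1..k}]=\mu_{l,k}(\|r_k\|)$ for a function $\mu_{l,k}$ of the norm only. By induction $\mu_{l,k}$ is 1-Lipschitz, since the composition of 1-Lipschitz expectations is 1-Lipschitz; this yields that $\mu_l$ is 1-Lipschitz. The 1-Lipschitz base step is $\big|\|r_k\|-\|r_k'\|\big|\le\|r_{k-1}-r_{k-1}'\|$ under a coupling with the same $A_k$ after rotating $r'_{k-1}$ to align with $r_{k-1}$. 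This is valid by the rotation invariance of $A_k$ and value-Lipschitzness of the min (distance rule). For the correlation rule, I will use that $\|r-Ab_\star\|$ with $b_\star$ the maximizer equals $\sqrt{\|r\|^2-2\max\langle r,Ab\rangle+\|Ab\|^2}$. This needs more care and is the main obstacle, so I would first try to show directly that $r\mapsto\|r-Ab_\star(r)\|$ is 1-Lipschitz along rays, which together with rotation suffices. Given the Lipschitz claims, the martingale increment $D_k=\mu_{l,k}(\|r_k\|)-\E[\cdot\mid A_{<k}]$ is, conditionally, a $c_k$-Lipschitz function of $A_k$. This holds because $\|r_k\|=g_k(A_k,r_{k-1})$ and, changing $A_k$ by $\Delta$, each $\|r-(A_k+\Delta)b\|$ moves by at most $\|\Delta b\|\le\|\Delta\|_F c_k$. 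So $D_k$ is conditionally sub-Gaussian with proxy $c_k^2$. The Azuma–Hoeffding-type argument for sub-Gaussian martingale differences gives proxy $\sum_k c_k^2$ and the stated two-sided tail.
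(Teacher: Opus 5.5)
\textbf{Verdict.} The distance-rule half is correct, but there is a genuine gap for correlation-based encoding. Your distance-rule argument (a Doob martingale over stages with conditional Gaussian concentration) is a valid alternative to the paper's route. The paper instead builds i.i.d.\ blocks $A_l=Q_{l-1}AP_l+A'_l(\Id-P_l)$, writes $\|r_l\|=f_l(\|r_{l-1}\|,A_l)$, and applies Gaussian concentration once to a composite that is jointly $(\sum_k c_k^2)^{1/2}$-Lipschitz. Both routes give the proxy $\sum_{k\le l}c_k^2$.

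\textbf{Where it fails.} In Step~3 you claim that, given $A_{<k}$, $D_k$ is a $c_k$-Lipschitz function of $A_k$, because each $\|r-(A_k+\Delta)b\|$ moves by at most $\|\Delta b\|$. That bound controls the value $\min_{b\in\cB_k}\|r_{k-1}-A_kb\|$, so it covers only the distance rule. Under the correlation rule, $\|r_k\|=\|r_{k-1}-A_kb_\star\|$ with $b_\star$ the correlation maximizer, and
\[
\|r_k\|^2=\|r_{k-1}\|^2-2\max_{b\in\cB_k}\langle r_{k-1},A_kb\rangle+\|A_kb_\star\|^2 .
\]
The first two terms are Lipschitz in $A_k$. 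The last one jumps whenever the argmax switches, because $\|A_kb\|$ generally differs across $b\in\cB_k$. For example, take $n=2$, $M=2$, $c_k=1$, $r_{k-1}=e_1$, and section columns $(1/2,0)^\top$ and $(1/2,10)^\top$. An arbitrarily small change in the first row flips the selection and moves $\|r_k\|$ between about $1/2$ and about $10$. So $D_k$ is not even continuous in $A_k$, and Gaussian concentration in the variable $A_k$ cannot be applied as you propose.

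\textbf{The missing idea} is the paper's change of Gaussian coordinates. Rotate so that $r_{k-1}$ points along $e_1$; this uses the left rotation invariance of $AP_k$ and its independence from $A_{<k}$. Then the selected $b_k$ depends on $A_k$ only through its first row $a$. Conditionally on $a$, the remaining-row inner products $(\langle a_i,b_k\rangle)_{i\ge 2}$ equal $c_k z$ with $z\sim\normal(0,\Id_{n-1})$ independent of $a$. This gives
\[
\|r_k\|=\sqrt{(\|r_{k-1}\|-\max_{b}\langle a,b\rangle)^2+c_k^2\|z\|^2},
\]
which is $c_k$-Lipschitz in the new Gaussian variables $(a,z)$ and $1$-Lipschitz in $\|r_{k-1}\|$. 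Your martingale step only needs the conditional law of $D_k$ given $A_{<k}$, so substituting this representation repairs it. It also gives the $1$-Lipschitz property of $\mu_{l,k}$ directly.

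You also located the difficulty in the wrong place. The obstacle you flagged is the easy part: $b_\star(sr)=b_\star(r)$ for $s>0$, so $r\mapsto\|r-Ab_\star(r)\|$ is trivially $1$-Lipschitz along rays. The genuinely hard part is the dependence on $A_k$, which you treated as routine.
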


Theorem~\ref{thm:concentration} shows that the residual norm concentrates at a rate controlled by the total power $\sum_{k=1}^L c_k^2$. This concentration holds uniformly over all realizations of the source, and the centering term $\mu_l(\|x\|)$ depends only on the source magnitude.

For a zero-mean ergodic source with variance $\sigma^2$, we have 
\begin{align*}
    \frac{1}{\sqrt{n}} \|x\| \xrightarrow[n \to \infty]{} \sigma
\end{align*}
both in probability and in mean square. In this setting, it is natural to design the power allocation based on $\sigma^2$ and to analyze performance in terms of this parameter. In the following, we show that the optimal power allocation (and more generally, any near-optimal allocation) satisfies
\begin{align}
\sum_{k=1}^L c_k^2 \le \sigma^2, \label{eq:ckUB}
\end{align}
so that the concentration level does not scale with  $n$. 

Moreover, since $\mu_l(\cdot)$ is $1$-Lipschitz, we may consider  a deterministic centering given by $\mu_l(\sqrt{n}\sigma)$. To  align with standard performance metrics, we define:
\begin{align}
D_l \coloneqq \frac{1}{n} \mu_l^2(\sqrt{n} \sigma), \label{eq:Dl}
\end{align}
which corresponds to the limiting mean squared error (i.e., the distortion). 

\begin{cor}
For every $\eps > 0$ and $l \in [L]$, the event
\begin{align*}
\left| \frac{1}{\sqrt{n}} \|r_l\| - \sqrt{D_l} \right|
\le 
\left| \frac{1}{\sqrt{n}}\|x\| - \sigma \right| + \epsilon
\end{align*}
holds with  probability at least $1-2 \exp\{ - n \eps^2 / (2 \sum_{k=1}^l c_k^2)\}$. 
\end{cor}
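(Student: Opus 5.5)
The plan is to derive the corollary directly from Theorem~\ref{thm:concentration}, using the triangle inequality together with the $1$-Lipschitz property of $\mu_l$. Conditioning on $x$ costs nothing: the tail bound in part~2 of Theorem~\ref{thm:concentration} holds for every $x$, so it holds conditionally on $x$, and averaging over the law of $x$ gives the unconditional statement.

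Fix $l$ and $\eps>0$. Apply part~2 of Theorem~\ref{thm:concentration} with $t = \sqrt{n}\,\eps$. With probability at least $1 - 2\exp\{-n\eps^2/(2\sum_{k=1}^l c_k^2)\}$,
\begin{align*}
\big| \|r_l\| - \mu_l(\|x\|) \big| \le \sqrt{n}\,\eps .
\end{align*}
On this event, the triangle inequality and the $1$-Lipschitz property of $\mu_l$ give
\begin{align*}
\big| \|r_l\| - \mu_l(\sqrt{n}\sigma) \big|
&\le \big| \|r_l\| - \mu_l(\|x\|) \big| + \big| \mu_l(\|x\|) - \mu_l(\sqrt{n}\sigma) \big| \\
&\le \sqrt{n}\,\eps + \big| \|x\| - \sqrt{n}\sigma \big| .
\end{align*}

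Next, divide by $\sqrt{n}$. Since $\mu_l$ takes values in $\bbR_+$, the definition \eqref{eq:Dl} gives $\sqrt{D_l} = \mu_l(\sqrt{n}\sigma)/\sqrt{n}$, and the nonnegative square root is the correct one. This yields exactly
\begin{align*}
\left| \frac{1}{\sqrt{n}} \|r_l\| - \sqrt{D_l} \right| \le \left| \frac{1}{\sqrt{n}}\|x\| - \sigma \right| + \eps .
\end{align*}

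There is no real obstacle here. The only points needing care are the choice $t=\sqrt{n}\eps$, which produces the exponent $n\eps^2$, and the sign convention for $\sqrt{D_l}$, which is justified by the nonnegativity of $\mu_l$.
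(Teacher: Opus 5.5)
Your proof is correct and is the argument the paper intends. The paper gives no separate proof and only remarks that $\mu_l$ is $1$-Lipschitz. Your steps supply exactly the missing details: apply part 2 of Theorem~\ref{thm:concentration} conditionally on $x$ with $t=\sqrt{n}\,\epsilon$, then use the triangle inequality and the Lipschitz property to move the centering from $\mu_l(\|x\|)$ to $\mu_l(\sqrt{n}\sigma)=\sqrt{nD_l}$.
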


In the following, we provide upper and lower bounds on $D_l$ as a function of the power allocation $\{c_l\}_{l=1}^L$ and structural properties of the code. This is achieved via a tractable approximation to the centering $D_l$ that we call the `deterministic prediction' $\Delta_l$. The $\{ \Delta_l \}$ yield explicit recursions that can be used to optimize the power allocation. For correlation-based encoding, we show the  difference $\abs{D_l - \Delta_l^{\mathrm{corr}}}$ is negligible under this power allocation  (Theorem~\ref{thm:D_corr_opt}). Such bounds are harder to obtain for distance-based encoding, but we show that for standard SPARCs, $\Delta_l^{\mathrm{dist}}$ is no larger than $\Delta_l^{\mathrm{corr}}$.

\subsection{Correlation-based encoding} 

\begin{definition}
For each stage $l \in [L]$, define the normalized Gaussian width
\begin{align}
\label{eq:omega_l_def}
\omega_l  \coloneqq \E \left[  \max_{b \in \cB_l} \frac{  \langle z, b \rangle}{\|b\| } \right], \qquad z \sim \normal(0, \Id_N) 
\end{align} 
This quantity depends only on the directions in $\mathcal{B}_l$ and satisfies $
0 \le \omega_l \le \sqrt{2 \log |\mathcal{B}_l|}$ \cite{boucheron2013concentration}.
\end{definition}

\begin{theorem}\label{thm:D_corr}
Consider an additive  orthogonal regression code with parameters $\{(c_l, \omega_l)\}_{l=1}^L$. Under correlation-based encoding, the deterministic centering  in \eqref{eq:Dl} satisfies a recursion of the form 
\begin{align*}
D_0= \sigma^2, \quad 
D_l =   \Big( \sqrt{ D_{l-1}  } -   \frac{ \omega_l c_l }{ \sqrt{n}}  \Big)^2 + c_l^2 + \frac{\eta_{l}}{n}, 
\end{align*}
where $|\eta_l| \le 2\sum_{k=1}^l c_k^2$. 
\end{theorem}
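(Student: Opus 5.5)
We will prove the recursion by conditioning on the past and analyzing one stage at a time. Fix $l$ and condition on $r_{l-1}$. The selection $b_l$ depends on $A$ only through $A b$ for $b\in\cB_l$, and by mutual orthogonality of the $\cB_k$ these columns-combinations are independent of the $A b'$ with $b'\in\cB_k$, $k<l$. Hence, conditionally on $r_{l-1}$, the variables $\{A b/\|b\|\}_{b\in\cB_l}$ form a Gaussian process $A u$ with $u$ ranging over unit directions in $\mathrm{span}(\cB_l)$, independent of $r_{l-1}$. Write $A b = \frac{\langle r_{l-1}, A b\rangle}{\|r_{l-1}\|^2} r_{l-1} + P^\perp A b$, where $P^\perp$ projects onto $r_{l-1}^\perp$. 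Then $g(b):=\langle r_{l-1}, A b\rangle/\|r_{l-1}\| \sim \normal(0,\|b\|^2)$ has the same law as $\langle z, b\rangle$ for $z\sim\normal(0,\Id_N)$, and $P^\perp A b$ is independent of $g$. The correlation rule picks $b_l$ maximizing $g(b)$, so $G_l := g(b_l)/c_l$ satisfies $\E[G_l\mid r_{l-1}] = \omega_l$ exactly.

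Next expand the residual:
\begin{align*}
\|r_l\|^2 = \big(\|r_{l-1}\| - c_l G_l\big)^2 + \|P^\perp A b_l\|^2 .
\end{align*}
Since $b_l$ is a function of $g$ only, $P^\perp A b_l$ is conditionally a Gaussian vector in an $(n-1)$-dimensional space with covariance $c_l^2 \Id$, so $\E\|P^\perp A b_l\|^2 = (n-1)c_l^2$. Taking conditional expectations gives
\begin{align*}
\E[\|r_l\|^2\mid r_{l-1}] = \big(\|r_{l-1}\| - c_l\omega_l\big)^2 + c_l^2\mathrm{Var}(G_l) + (n-1)c_l^2 .
\end{align*}
Dividing by $n$, this is the claimed form with $c_l^2$ plus an $O(c_l^2/n)$ error, using $\mathrm{Var}(G_l)\le 1$ (the Gaussian Poincaré inequality applied to the $1$-Lipschitz max functional).

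The remaining work, and the main obstacle, is to translate this into the statement about $D_l=\mu_l^2(\sqrt n\sigma)/n$, which concerns $(\E\|r_l\|)^2$ rather than $\E\|r_l\|^2$, and to replace the random $\|r_{l-1}\|$ by the deterministic $\mu_{l-1}$. Here I will use Theorem~\ref{thm:concentration}: the variance of $\|r_k\|$ is at most $\sum_{j\le k}c_j^2$. This gives $\E\|r_l\|^2 - \mu_l^2 \in [0,\sum_{j\le l}c_j^2]$. Similarly, $\E(\|r_{l-1}\|-c_l\omega_l)^2 = (\mu_{l-1}-c_l\omega_l)^2 + \mathrm{Var}\|r_{l-1}\|$, with the variance term in $[0,\sum_{j<l}c_j^2]$. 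Combining these three errors (the $-c_l^2 + c_l^2\mathrm{Var}(G_l)\in[-c_l^2,0]$ term, the stage-$(l-1)$ variance, and the stage-$l$ Jensen gap) with opposite signs arranged appropriately yields $|\eta_l|\le 2\sum_{k\le l}c_k^2$, using that these contributions carry opposite signs.

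To start the recursion, we evaluate at $\|x\|=\sqrt n\sigma$, for which $\mu_0=\sqrt n\sigma$ and $D_0=\sigma^2$. Because $\mu_{l}$ depends on $x$ only through $\|x\|$, we can take $x$ deterministic with that norm throughout. The delicate bookkeeping is the sign-tracking of the variance terms so that they combine to within $2\sum c_k^2$ rather than $3\sum c_k^2$; if that fails, I would bound $\mathrm{Var}(G_l)$ more carefully or absorb $c_l^2$ into the $k=l$ term.
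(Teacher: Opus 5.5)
Your proposal is correct and is essentially the paper's proof. The paper obtains the same one-stage decomposition $\|r_l\|^2 = (\|r_{l-1}\| - c_l G_l)^2 + c_l^2\|z_l\|^2$ via the rotation $Q_{l-1}$ of Lemma~\ref{lem:Al} (equivalent to your projection onto $r_{l-1}$ and its orthogonal complement), converts $\E\|r_l\|^2$ into $\mu_l^2$ using the variance bounds from Theorem~\ref{thm:concentration}, and arrives at the same $\eta_l = \var(\|r_{l-1}\|) - \var(\|r_l\|) + c_l^2(\var(G_l) - 1)$, where the plain triangle inequality already gives $|\eta_l| \le \sum_{k<l} c_k^2 + \sum_{k\le l} c_k^2 + c_l^2 = 2\sum_{k\le l} c_k^2$, so the sign-tracking you flagged as delicate is unnecessary.
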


For the purposes of analysis, we consider an approximation to this  recursion that omits the $\eta_l$ terms. Specifically, we define  
\begin{align}
\Delta_0 =\sigma^2 , \quad 
\Delta_l =   \Big( \sqrt{ \Delta _{l-1}  } -   \frac{ \omega_l c_l }{ \sqrt{n}}  \Big)^2+    c^2_l. \label{eq:Delta_l_corr} 
\end{align}
The numerical simulations in Figure \ref{fig:corr_based} show a close match with the theoretical prediction $\Delta_L$, for different power allocations and SPARC variants.

Since each update in \eqref{eq:Delta_l_corr} is quadratic in $c_l$, minimizing sequentially over the power allocation yields
\begin{align}
c^2_l = \frac{ \sigma^2 \omega^2_l}{n + \omega_l^2}  \prod_{k=1}^{l} \frac{n}{n+ \omega_k^2}
\label{eq:c_corr}.
\end{align}
This allocation satisfies the total power constraint in \eqref{eq:ckUB} and results in
\begin{align}
\Delta^\mathrm{corr}_l & \coloneqq \sigma^2 \prod_{k=1}^l \frac{n}{n+\omega_k^2}. \label{eq:Delta_corr_opt} 
\end{align}
\iflongpaper
The derivation of the optimal power allocation, including verification of the power constraint and the resulting value of $\Delta^\mathrm{corr}_l$, 
is provided in Appendix~\ref{app:power_alloc_corr}.
\fi

The next theorem shows that this power allocation is near-optimal for the deterministic centering $\{D_l\}$. This result does not follow directly from Theorem~\ref{thm:D_corr}, but instead requires a refined analysis that exploits the specific structure of \eqref{eq:c_corr}.

\begin{theorem}\label{thm:D_corr_opt} Consider an additive  orthogonal regression code with correlation-based encoding and $n > 2$. Under the power allocation \eqref{eq:c_corr}, the deterministic centering in \eqref{eq:Dl} satisfies 
\begin{align*}
D_l \le \Delta^\mathrm{corr}_l +  2 ( \sigma^2 - \Delta^\mathrm{corr}_l) \eps_{l-1}.
\end{align*}
where
\begin{align*}
\eps_{l} 
& \coloneqq  1-  \prod_{k=1}^{l} \frac{ (n-2 ) (n+ \omega_k^2)}{n (n-2+\omega_k^2)} \le  \frac{2}{n} \sum_{k=1}^{l} \frac{ \omega_k^2}{  n - 2 + \omega_k^2}.
\end{align*}
Conversely, for any power allocation, we have the lower bound 
\begin{align*}
D_l & \ge \Delta_l^\mathrm{corr} (1- \eps_l).
\end{align*}
\end{theorem}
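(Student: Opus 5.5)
We will track the exact one-step conditional law of the squared residual norm, rather than rely on the crude $\eta_l$ bound of Theorem~\ref{thm:D_corr}. The structure we use is this. Fix $r_{l-1}$. Conditional on $r_{l-1}$ and on all previously chosen columns, the block $A\cB_l$ involves columns of $A$ that are orthogonal to those already used. Decompose each such column $A b$ into its component along $r_{l-1}/\|r_{l-1}\|$ and an orthogonal part. The component along $r_{l-1}$ is then $\|b\|\, g_b$, where $(g_b)$ is a Gaussian process indexed by $b\in\cB_l$ whose covariance is given by the inner products of the directions $b/\|b\|$; the orthogonal part is independent of it. The correlation rule picks $b_l$ maximizing $g_b$, so $\langle r_{l-1}, Ab_l\rangle = c_l \|r_{l-1}\| G_l$ with $\E G_l=\omega_l$, where $G_l$ is independent of $\|r_{l-1}\|$. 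This gives the exact identity
\begin{align*}
\|r_l\|^2 = \|r_{l-1}\|^2 - 2 c_l \|r_{l-1}\| G_l + c_l^2\big(G_l'^2 + \chi^2_{n-1}\text{-type term}\big),
\end{align*}
where $\|Ab_l\|^2 = c_l^2(G_l^2 + W_l)$ and $W_l\sim\chi^2_{n-1}$ is independent of everything else. The cross terms should be handled exactly by choosing a coordinate system aligned with $r_{l-1}$; the key point is that the distribution of $\|r_l\|$ given $\|r_{l-1}\|$ depends only on $\|r_{l-1}\|$.

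For the \emph{upper bound}, take conditional expectations. We get $\E[\|r_l\|^2\mid r_{l-1}] = \|r_{l-1}\|^2 - 2c_l\omega_l\|r_{l-1}\| + c_l^2(n-1+\E G_l^2)$. Then use $\E\|r_{l-1}\| = \mu_{l-1}$ and $\E\|r_{l-1}\|^2 = \mu_{l-1}^2 + \mathrm{Var}$, with the variance bounded by Theorem~\ref{thm:concentration}. This yields a recursion for $m_l:=\E\|r_l\|^2/n\ge D_l$ of the form $m_l \le m_{l-1} - 2 c_l\omega_l\sqrt{D_{l-1}}/\sqrt n + c_l^2(1+\text{small})$.

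The refinement needed beyond Theorem~\ref{thm:D_corr} is to plug in the specific $c_l$ from \eqref{eq:c_corr} and to prove by induction a multiplicative statement, $D_l\le \Delta_l^{\mathrm{corr}}(1+\cdot)$. Writing $\sigma^2-\Delta^{\mathrm{corr}}_l=\sum_{k\le l} n c_k^2\cdot(\text{ratio})$, the error terms telescope. The factor $(n-2)/n$ comes from correcting $\E G_l^2 - \omega_l^2 \le 1$ (Gaussian Poincaré) together with the $-1$ in $n-1$. This is how $\eps_{l-1}$ appears: as the product of per-stage ratios $\frac{(n-2)(n+\omega_k^2)}{n(n-2+\omega_k^2)}$. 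The inequality $\eps_l\le \frac2n\sum_k \frac{\omega_k^2}{n-2+\omega_k^2}$ follows from $1-\prod(1-a_k)\le\sum a_k$, since each factor equals $1-\frac{2\omega_k^2}{n(n-2+\omega_k^2)}$.

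For the \emph{lower bound}, which holds for arbitrary $c_l$, work with $m_l$ and Jensen's inequality in the opposite direction. We have $D_l = \mu_l^2/n$, and $\mu_l^2\ge \E\|r_l\|^2 - \mathrm{Var}$, while $\E\|r_{l-1}\|\le\sqrt{n m_{l-1}}$. Then $m_l \ge m_{l-1} - 2c_l\omega_l\sqrt{m_{l-1}/n} + c_l^2(n-1+\omega_l^2)/n$. Minimizing over $c_l$ gives $m_l\ge m_{l-1}\,\frac{n-1}{n-1+\omega_l^2}$ or similar. Iterating and subtracting the variance (at most $\sum c_k^2$, itself controlled) gives $D_l\ge \Delta^{\mathrm{corr}}_l(1-\eps_l)$, after adjusting constants to obtain the $n-2$ form.

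The main obstacle is bookkeeping the variance correction. The difference between $\E\|r\|^2$ and $(\E\|r\|)^2$ must be absorbed multiplicatively with the exact $n-2$ constants. I would first try to track the pair $(\E\|r_l\|,\E\|r_l\|^2)$ jointly, and use the Lipschitz/sub-Gaussian bound only once per stage.
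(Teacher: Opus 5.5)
Both halves have genuine gaps, and they share one missing idea: prove the lower bound first, directly on $\E\|r_l\|$, and then feed it into the upper bound.

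\textbf{Upper bound.} Your induction cannot close as planned. In the second-moment recursion $\E\|r_l\|^2\le\E\|r_{l-1}\|^2-2c_l\omega_l\,\E\|r_{l-1}\|+c_l^2(n+\omega_l^2)$, the cross term carries a minus sign. So at stage $l$ you need a \emph{lower} bound on $\E\|r_{l-1}\|$. An inductive hypothesis of the form $D_{l-1}\le\Delta^{\mathrm{corr}}_{l-1}(1+\cdot)$ gives no control over that term. Bringing in the variance bound of Theorem~\ref{thm:concentration} only reproduces the additive $O(\sum_k c_k^2)$ error of Theorem~\ref{thm:D_corr}.

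The paper leaves $\E\|r_{l-1}\|^2$ untouched and inserts $\E\|r_{l-1}\|\ge\sqrt{n\Delta^{\mathrm{corr}}_{l-1}}\,(1-\eps_{l-1})$ from the already-proved lower bound. At the allocation \eqref{eq:c_corr} one has $c_l^2(n+\omega_l^2)=c_l\omega_l\sqrt{n\Delta^{\mathrm{corr}}_{l-1}}=n(\Delta^{\mathrm{corr}}_{l-1}-\Delta^{\mathrm{corr}}_l)$. Hence each stage lowers $\E\|r_l\|^2$ by at least $n(\Delta^{\mathrm{corr}}_{l-1}-\Delta^{\mathrm{corr}}_l)(1-2\eps_{l-1})$. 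Summing, using that $\eps_k$ is increasing, and applying $D_l\le\E\|r_l\|^2/n$ gives the claim. This is where $\eps_{l-1}$ and the factor $2$ come from.

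Your account of the $n-2$ is misplaced. In the upper bound, $\E G_l^2-\omega_l^2\le 1$ and the $n-1$ combine to exactly $n+\omega_l^2$, which matches the vertex of the allocation. The $n-2$ arises only in the lower bound.

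\textbf{Lower bound.} You pass through $m_l$ and convert to $D_l$ at the end by subtracting $\mathrm{Var}\|r_l\|\le\sum_{k\le l}c_k^2$. The statement is for an \emph{arbitrary} allocation, where this sum is unbounded. So $\sigma^2\prod_k\frac{n-1}{n-1+\omega_k^2}-\frac1n\sum_{k\le l}c_k^2$ can even be negative.

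Carrying the accumulated variance through the recursion does not rescue it. Because of the step $\E\|r_{l-1}\|\le\sqrt{n m_{l-1}}$, the recursion for $n m_l-\sum_{k\le l}c_k^2$ retains an uncancelled term $-\frac{\omega_l^2}{n-2+\omega_l^2}\sum_{k<l}c_k^2$ at each stage after optimizing over $c_l$. Already at $l=2$, the certified bound falls strictly below $\Delta^{\mathrm{corr}}_2(1-\eps_2)$ for the worst $c_1$, so no adjustment of constants produces the $n-2$ form.

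The fix is to bound $\E\|r_l\|$ directly and never pass through the second moment. By \eqref{eq:rl_decomp_corr}, $\|r_l\|$ is the convex function $\sqrt{x^2+y^2}$ of $x=\|r_{l-1}\|-\max_{b}\langle a_{l1},b\rangle$ and $y=c_l\|z_l\|$. Jensen then gives $(\E\|r_l\|)^2\ge(\E\|r_{l-1}\|-c_l\omega_l)^2+c_l^2(\E\|z_l\|)^2$, and Gaussian Poincar\'e gives $(\E\|z_l\|)^2\ge(n-1)-1=n-2$. Minimizing over $c_l\ge 0$ yields $(\E\|r_l\|)^2\ge\frac{n-2}{n-2+\omega_l^2}(\E\|r_{l-1}\|)^2$ for every allocation, and iterating gives $D_l\ge\Delta^{\mathrm{corr}}_l(1-\eps_l)$.

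Equivalently, a per-stage bound $\mathrm{Var}\|r_l\|\le\mathrm{Var}\|r_{l-1}\|+c_l^2$, from the law of total variance, yields the same recursion for $\mu_l^2$. Your closing ``once per stage'' remark gestures at this, but your written plan does not use it. Your derivation of $\eps_l\le\frac2n\sum_k\frac{\omega_k^2}{n-2+\omega_k^2}$ is fine.
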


\begin{remark} A nonasymptotic bound on the distortion  of SPARCs under correlation-based encoding  was obtained  in \cite[Theorem 1]{RVGaussFeasible}. Our results improve upon the result in  \cite{RVGaussFeasible} in the sense that: i) they hold for any additive orthogonal regression code,  ii) concentration is proved under any choice of power allocation, and iii) the near-optimality of the power allocation in \eqref{eq:c_corr} is established by tight upper and lower bounds on the deterministic centering in Theorem~\ref{thm:D_corr_opt}. 
\end{remark}

Under the power allocation \eqref{eq:c_corr},  for a fixed $n,L$, the predicted distortion $\Delta^\mathrm{corr}_L$ is a function of the Gaussian widths $\{ \omega_l \}$, which are determined by the coefficient sets $\{ \cB_l \}$ defining the code. The following proposition specifies the Gaussian widths for the SPARC,  signed SPARC, and $K$-SPARC.

\begin{prop} \label{prop:Gaussian_width}
Consider a $K$-sparse SPARC (defined in Section \ref{sec:additive}) with $L$ sections and $M$ columns per section,  and let $\omega_{l,K}$ denote its Gaussian width \eqref{eq:omega_l_def}, for $l \in [L]$. (Recall that $K=1$ corresponds to the standard SPARC.)  Furthermore, let $\omega_{l, \pm}$  denote the Gaussian width of the signed SPARC with the same values of $L,M$. Then, letting $Z_1, \ldots, Z_M$ $\sim_{\text{i.i.d.}}  \normal(0,1)$, we have for $l \in [L]$: 
\begin{itemize}[leftmargin=1.2em,itemsep=1pt, topsep=2pt, parsep=0pt]
        \item Standard SPARC:   $\omega_{l,1} = \E\left[ \max_{1 \le j \le M} \, Z_j \right]$,
        \item Signed SPARC: $\omega_{l, \pm} = \E\left[ \max_{1 \le j \le M} \, \abs{Z_j} \right]$,
        \item $K$-sparse SPARC: 
        \[ \omega_{l, K} = 
        \E[ Z_{(M)} + Z_{(M-1)} + \ldots + Z_{(M-K+1)} ]/\sqrt{K}, \] where $Z_{(M)}$ denotes the maximum  of $Z_1, \ldots, Z_M$ $\sim_{\text{i.i.d.}}  \normal(0,1)$, and $Z_{(M-1)}$ denotes the second maximum (second-largest value), and so on. 
\end{itemize}
Consequently, for fixed $K$, these Gaussian widths admit the following
asymptotic expansions as $M \to \infty$. Define
\begin{align*}
\psi(M) \coloneqq \sqrt{2 \log M} - \frac{\log(4\pi \log M) - 2\gamma}{2\sqrt{2 \log M}},
\end{align*}
where $\gamma \approx 0.5772$ is the Euler-Mascheroni constant and let $H_K = \sum_{j=1}^K \frac{1}{j}$ be the $K$th Harmonic number. Then, we have
\begin{align}
\omega_{l,1} &= \psi(M) + \mathcal{O}\!\left(\rho_M \right), \label{eq:w_sparc_asymp} \\
\omega_{l,\pm} &= \psi(2M) + \mathcal{O}\!\left(\rho_M\right), \label{eq:w_sparc_pm_asymp} \\
\omega_{l,K} &= \sqrt{K}\,\psi(M) - \frac{K H_{K-1} - K + 1}{\sqrt{2K \log M}} + \mathcal{O}\!\left(\rho_M \right), \label{eq:w_2sparc_asymp}
\end{align}
where  $\rho_M \coloneqq (\log \log M)^2 (\log M)^{-3/2}$. 
\end{prop}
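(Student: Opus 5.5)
The exact formulas come first and are direct computations from the definition \eqref{eq:omega_l_def}. In each case every $b\in\cB_l$ is supported on the coordinates $(l-1)M<j\le lM$. Write $Z_j$ for $z$ restricted to those coordinates; these are i.i.d.\ $\normal(0,1)$.
\begin{itemize}
\item For the standard SPARC, $b=c_le_j$, so $\langle z,b\rangle/\|b\|=Z_j$ and $\omega_{l,1}=\E\max_j Z_j$.
\item For the signed SPARC, $b=\pm c_le_j$, so the ratio is $\pm Z_j$ and the maximum over both signs is $|Z_j|$.
\item For the $K$-sparse SPARC, $b=(c_l/\sqrt K)\sum_{j\in S}e_j$ with $|S|=K$ and $\|b\|=c_l$. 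The ratio is $K^{-1/2}\sum_{j\in S}Z_j$, which is maximized by taking $S$ to be the $K$ largest order statistics. Taking expectations gives the stated formula.
\end{itemize}
None of this depends on $l$ or $c_l$.

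For the asymptotics, I would start from classical extreme value theory for the Gaussian maximum. Let $a_M=\sqrt{2\log M}$ and $b_M=a_M-\frac{\log(4\pi\log M)}{2a_M}$. Then $a_M(Z_{(M)}-b_M)$ converges to a standard Gumbel variable $G$, and $\E G=\gamma$. Formally,
\[
\E Z_{(M)}=b_M+\gamma/a_M+\text{error}=\psi(M)+\text{error}.
\]
To get the quantitative error $\rho_M$, I would not rely on weak convergence. Instead I would write
\[
\E Z_{(M)}=\int \Bigl(1-\Phi(t)^M\Bigr)\,dt-\int_{-\infty}^0\Phi(t)^M\,dt ,
\]
where the second term is exponentially small. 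I would then substitute $t=b_M+s/a_M$ and use the Mills-ratio expansion $1-\Phi(t)=\phi(t)t^{-1}\bigl(1+O(t^{-2})\bigr)$. This gives
\[
M\bigl(1-\Phi(b_M+s/a_M)\bigr)=e^{-s}\Bigl(1+O\bigl(s^2/a_M^2+\log\log M/a_M^2\bigr)\Bigr)
\]
uniformly for $|s|\lesssim\log\log M$. The tails outside this window are negligible. Integrating against the Gumbel density yields the $(\log\log M)^2(\log M)^{-3/2}$ remainder after dividing by $a_M$.

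This error control is the main technical step. I would handle the two tails separately: for large negative $s$ the term $\Phi^M$ is doubly exponentially small, and for large positive $s$ one uses the bound $1-\Phi^M\le M(1-\Phi)$.

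For the signed case, $|Z_j|$ has tail $2(1-\Phi(t))$. This is exactly the tail of a maximum of $2M$ Gaussians for $t$ in the relevant range. So the same computation with $M$ replaced by $2M$ gives $\psi(2M)$, and the discrepancy from the lower tail is negligible.

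For the $K$-sparse case, I would use the point-process limit of the top order statistics. The variables $a_M(Z_{(M-i+1)}-b_M)$ for $i=1,\dots,K$ converge to $-\log(\Gamma_i)$, where $\Gamma_i=E_1+\dots+E_i$ are partial sums of i.i.d.\ exponentials. Since $\E[-\log\Gamma_i]=-\digamma(i)=\gamma-H_{i-1}$, we get
\[
\E Z_{(M-i+1)}=\psi(M)-H_{i-1}/a_M+\text{error}.
\]
Summing over $i=1,\dots,K$ uses $\sum_{i=1}^K H_{i-1}=KH_{K-1}-K+1$. Dividing by $\sqrt K$ and using $\sqrt K\,a_M=\sqrt{2K\log M}$ gives \eqref{eq:w_2sparc_asymp}.

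The quantitative remainder for the $i$th order statistic is handled by the same integral representation, now applied to the binomial tail probability $\P\bigl(\#\{j:Z_j>t\}<i\bigr)$ with the same substitution $t=b_M+s/a_M$. This step, making the error uniform for fixed $K$, is the most delicate part of the argument, but it is a routine extension of the $K=1$ case.
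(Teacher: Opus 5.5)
Your exact formulas are the same direct computation as in the paper, and your bookkeeping ($\psi(M)=b_M+\gamma/a_M$, $\E[-\log\Gamma_i]=\gamma-H_{i-1}$, $\sum_{i=1}^K H_{i-1}=KH_{K-1}-K+1$) is correct. For the expansions, however, you take a genuinely different route, and it is valid.

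You work on the distribution-function side. You integrate $\P(\#\{j:Z_j>t\}<i)$ in $t$, rescale $t=b_M+s/a_M$, use Mills' ratio to get $M\bar\Phi(t)\approx e^{-s}$, and compare with the Gumbel/Poisson limit. This needs separate tail estimates and, for $i>1$, a Binomial-to-Poisson comparison that you only sketch. It does go through, e.g.\ via $d_{\mathrm{TV}}(\mathrm{Bin}(M,p),\mathrm{Poi}(Mp))\le p$ plus a Lipschitz bound on the Poisson distribution function in its mean.

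The paper works on the quantile side instead. It uses $Z_{(M-i+1)}\overset{d}{=}q(U_{(i)})$ with $q=\bar\Phi^{-1}$ and $U_{(i)}\sim\mathrm{Beta}(i,M-i+1)$. Then $W_i=\log(MU_{(i)})$ is the exact finite-$M$ version of your $\log\Gamma_i$, and its mean $H_{i-1}-\gamma+\mathcal O(1/M)$ follows from standard Beta-distribution properties. Everything then reduces to one uniform Taylor expansion of the Gaussian quantile asymptotics about $v=\log M$, plus a Cauchy--Schwarz bound on $\{|W_i|>\tfrac12\log M\}$. No limit theorem or Poisson approximation is needed, and the signed case is simply $\max_j|Z_j|\overset{d}{=}q(U_{(1)}/2)$.

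Your route is more self-contained (only Mills' ratio, no imported quantile expansion with remainder) and makes the extreme-value structure explicit. The price is heavier error bookkeeping for general $i$.

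Two slips, neither of which changes the conclusion:
\begin{itemize}
\item The relative error in $M\bar\Phi(b_M+s/a_M)=e^{-s}(1+\cdots)$ also contains an $s$-independent term of order $(\log\log M)^2/\log M$, from squaring the $\log(4\pi\log M)/(2a_M)$ correction in $b_M$. It also has a cross term of order $|s|\log\log M/\log M$. The former is exactly what produces the $(\log\log M)^2$ in $\rho_M$.
\item $\max_{j\le M}|Z_j|$ is not exactly a maximum of $2M$ Gaussians, since $(1-2\bar\Phi)^M\neq(1-\bar\Phi)^{2M}$. The difference is $\mathcal O(M\bar\Phi^2)$, which is negligible where it matters.
\end{itemize}
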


\begin{remark}
The analysis in \cite{RVGaussFeasible} used power allocation:
\begin{align}
c_l^2 = \frac{ \sigma^2  2 \log M }{ n} \left( 1 - \frac{ 2 \log M}{ n} \right)^{l-1}.
\label{eq:asymp_PA}
\end{align}
If we use the heuristic  $\omega_l = \sqrt{2\log M}$ then the power allocation in \eqref{eq:c_corr} for an $(n,L,M)$ SPARC satisfies 
\begin{align}
    c_l^2 & =\frac{ \sigma^2  2 \log M}{ n + 2 \log M} \left(  1-  \frac{ 2 \log M}{n + 2 \log M} \right)^{l}. 
    \label{eq:cl_heuristic}
\end{align}
Thus, we expect  the power allocation in \cite{RVGaussFeasible} to be near optimal when $n \gg \log M$ (or equivalently $L \gg R$), but suboptimal otherwise; see Figure \ref{subfig:exp1} and discussion in Section \ref{sec:experiments}.
\end{remark}

\subsection{Distance-based encoding}

\begin{definition}
For each block $l$, define  $\gamma_l \colon \bbN \times  \bbR \to \bbR_+$ as
\begin{align}
\gamma_l(n, \lambda)   \coloneqq \E\left[ \min_{b \in \cB_l}  \Big \|  e_1 -  \frac{  \lambda}{ \|b\|}   A b  \Big\| \right]. 
\label{eq:gamma_l_def}
\end{align} 
where $e_1 \in \bbR^n $ is the first standard basis vector  and $A \in \bbR^{n \times N}$ has i.i.d. $\normal(0,1)$ entries. 
\end{definition}

\begin{theorem}\label{thm:D_dist}
Consider an additive orthogonal regression code with parameters $\{(c_l, \gamma_l)\}_{l=1}^L$. Under distance-based encoding, the deterministic centering in \eqref{eq:Dl} satisfies a recursion of the form
\begin{align*}
D_0= \sigma^2, \quad 
\sqrt{D_l} =  \sqrt{D_{l-1} } \, \gamma_l \Big(n,  \frac{c_l }{ \sqrt{ n  D_{l-1} }}\Big )  + \frac{\eta_{l}}{\sqrt{n}} 
\end{align*}
where $|\eta_l| \le (\sum_{k=1}^{l-1} c_k^2)^{1/2}$.
\end{theorem}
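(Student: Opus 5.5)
The proof proceeds by conditioning on the first $l-1$ stages and exploiting rotational invariance of the fresh columns. By orthogonality of the sets $\cB_1,\dots,\cB_L$, the vectors $Ab$ for $b\in\cB_l$ depend only on the projection of $A$ onto $\mathrm{span}(\cB_l)$. For Gaussian $A$, these projections are independent of the projections onto the other spans $\mathrm{span}(\cB_k)$, $k\neq l$. The residual $r_{l-1}$ is a function of $x$ and of $A$ restricted to $\mathrm{span}(\cB_1\cup\dots\cup\cB_{l-1})$. Hence the Gaussian process $\{Ab : b\in\cB_l\}$ is independent of $r_{l-1}$. Its law is invariant under orthogonal transformations $Q$ of $\bbR^n$, since $QA$ has the same law as $A$. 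Conditionally on $r_{l-1}$, we can therefore rotate $r_{l-1}$ to $\|r_{l-1}\|e_1$. Writing $b=c_l u$ with $\|u\|=1$, we get
\begin{align*}
\min_{b\in\cB_l}\|r_{l-1}-Ab\| \overset{d}{=} \|r_{l-1}\|\min_{b\in\cB_l}\Big\|e_1-\frac{c_l}{\|r_{l-1}\|\,\|b\|}Ab\Big\|.
\end{align*}
Taking the conditional expectation gives $\E[\|r_l\|\mid r_{l-1}] = \|r_{l-1}\|\,\gamma_l\big(n, c_l/\|r_{l-1}\|\big)$, exactly, with no error term. Define $g_l(s)\coloneqq s\,\gamma_l(n,c_l/s)$. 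This is the one-step map for the expected residual norm.

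Next I would show that $g_l$ is $1$-Lipschitz in $s$. Write
\begin{align*}
g_l(s)=\E\Big[\min_{b\in\cB_l}\Big\| s e_1 - \frac{c_l}{\|b\|}Ab\Big\|\Big].
\end{align*}
For each fixed $A$ and $b$, the norm inside is $1$-Lipschitz in $s$. A minimum of $1$-Lipschitz functions is $1$-Lipschitz, and so is an expectation of them. (This is presumably the same mechanism used in Theorem~\ref{thm:concentration} to show $\mu_l$ is $1$-Lipschitz.) The tower property gives
\begin{align*}
\mu_l(\|x\|)=\E[\|r_l\|\mid x]=\E[g_l(\|r_{l-1}\|)\mid x].
\end{align*}
The error to control is therefore $\E[g_l(\|r_{l-1}\|)] - g_l(\mu_{l-1})$, where $\mu_{l-1}=\mu_{l-1}(\sqrt n\sigma)$ and $x$ is taken with $\|x\|=\sqrt n\sigma$. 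By the Lipschitz property this is at most
\begin{align*}
\E\big|\|r_{l-1}\|-\mu_{l-1}\big| \le \sqrt{\mathrm{Var}(\|r_{l-1}\|)}.
\end{align*}

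To bound this variance, I would invoke Theorem~\ref{thm:concentration}. It gives sub-Gaussianity with variance proxy $\sum_{k=1}^{l-1}c_k^2$, but sub-Gaussianity alone costs a constant factor in the variance bound. The stated bound $|\eta_l|\le(\sum_{k<l}c_k^2)^{1/2}$ needs $\mathrm{Var}\le\sum_{k<l}c_k^2$ with constant one. This is what I expect to be the main technical point. I would obtain it from the Gaussian Poincaré inequality, applied to $\|r_{l-1}\|$ as a function of the Gaussian entries of $A$. The residual can be written as
\begin{align*}
r_{l-1}=x-\sum_{k<l}c_k A u_k, \qquad u_k=b_k/c_k.
\end{align*}
Since a minimum or argmax over a finite set is piecewise smooth, $\|r_{l-1}\|$ is almost everywhere differentiable. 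Its gradient with respect to $A$ is the rank-one-sum $-\hat r\,(\sum_k c_k u_k)^\top$, where $\hat r = r_{l-1}/\|r_{l-1}\|$ and the indices $b_k$ are held fixed (an envelope or selection argument). This gradient has Frobenius norm $(\sum_{k<l} c_k^2)^{1/2}$, because the $u_k$ are orthonormal. For the greedy selection this needs care, since later choices depend on earlier ones. I would instead argue that $\|r_{l-1}\|$ is $(\sum_{k<l} c_k^2)^{1/2}$-Lipschitz in $A$ in Frobenius norm, which presumably is the core lemma behind Theorem~\ref{thm:concentration}. Poincaré then gives $\mathrm{Var}\le\sum_{k<l}c_k^2$.

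Finally, divide by $\sqrt n$. Since $\sqrt{D_{l}}=\mu_{l}/\sqrt n$ for each $l$, the identity $g_l(\mu_{l-1})=\mu_{l-1}\gamma_l(n,c_l/\mu_{l-1})$ becomes
\begin{align*}
\sqrt{D_{l-1}}\;\gamma_l\Big(n,\frac{c_l}{\sqrt{nD_{l-1}}}\Big),
\end{align*}
with remainder $\eta_l/\sqrt n$. This remainder satisfies the stated bound $|\eta_l|\le(\sum_{k<l}c_k^2)^{1/2}$, which completes the recursion.
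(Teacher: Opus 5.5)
Everything up to the variance bound matches the paper's proof. That includes the conditional identity $\E[\|r_l\|\mid r_{l-1}]=g_l(\|r_{l-1}\|)$ with $g_l(s)=s\,\gamma_l(n,c_l/s)$, the $1$-Lipschitz property of $g_l$, and the bound $\E\big|\|r_{l-1}\|-\mu_{l-1}\big|\le\sqrt{\var(\|r_{l-1}\|)}$. The gap is in how you bound that variance.

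Your reason for not citing Theorem~\ref{thm:concentration} directly is mistaken. The paper defines sub-Gaussianity with variance proxy $\nu$ through the moment generating function: $\log\E[e^{\lambda X}]\le\lambda^2\nu/2$ for all $\lambda$. Comparing second-order terms at $\lambda=0$ gives $\var(X)\le\nu$ with constant one (Lemma~\ref{lem:sub-Gaussian}). A constant is lost only if you start from the tail bound alone. So Lemma~\ref{lem:iter_con}, applied to the recursion \eqref{eq:rl_decomp_dist}, already gives $\var(\|r_{l-1}\|)\le\sum_{k<l}c_k^2$, and this is exactly how the paper concludes.

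The substitute argument you propose does not work. The map $A\mapsto\|r_{l-1}\|$ is not $(\sum_{k<l}c_k^2)^{1/2}$-Lipschitz, and for $l\ge 3$ it is not even continuous. Suppose $A$ crosses a tie surface of the stage-$1$ selection, e.g.\ $\|x-c_1a_j\|=\|x-c_1a_{j'}\|$ in a standard SPARC. Then $r_1$ jumps between two distinct vectors of equal norm. Since $\|r_2\|=\min_{b\in\cB_2}\|r_1-Ab\|$ depends on the direction of $r_1$, it jumps as well.

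Your a.e.\ gradient computation on each selection cell is correct. However, the Gaussian Poincar\'e inequality cannot be applied to a function with jump discontinuities; a sign function has vanishing a.e.\ gradient but positive variance.

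Nor is this the mechanism behind Theorem~\ref{thm:concentration}. There, Lipschitzness holds for $\|r_l\|$ as a function of the rotated matrices $A_k=Q_{k-1}AP_k+A_k'(\Id-P_k)$ of Lemma~\ref{lem:Al}. It comes from the scalar recursion $\|r_k\|=\min_{b\in\cB_k}\big\|\,\|r_{k-1}\|e_1-A_kb\big\|$, which is $1$-Lipschitz in $\|r_{k-1}\|$ and $c_k$-Lipschitz in $A_k$. The discontinuity is absorbed into the law-preserving change of variables $A\mapsto(A_1,\dots,A_l)$.

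To repair the step, either apply Poincar\'e to that Lipschitz function of $(A_1,\dots,A_{l-1})$, or simply use the MGF form of sub-Gaussianity as the paper does.
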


Similar to the analysis of correlation-based encoding, we introduce an approximation that  omits the $\eta_l$ terms. Define 
\begin{align}
\Delta_0 =\sigma^2 , \quad 
\Delta_l &= \Delta_{l-1}  \, \gamma^2_l \Big(n,  \frac{c_l }{ \sqrt{ n  \Delta_{l-1} }}\Big ) \label{eq:Delta_l_dist} 
\end{align}
The optimal power allocation with respect to this sequence can be obtained by sequentially minimizing $\Delta_l$ in \eqref{eq:Delta_l_dist}. The resulting distortion prediction is 
\begin{align}
 \Delta^\mathrm{dist}_l & = \sigma^2 \prod_{k=1}^l \bar{ \gamma}^2_{k}(n) 
\end{align}
where $\bar{\gamma}_l(n) \coloneqq \inf_{\lambda \in \bbR^+} \gamma_l(n, \lambda)$. 
Unlike the correlation-based case, this optimal allocation is defined only implicitly. Writing $\lambda^*_l \in \argmin_{\lambda \in \bbR^+} \gamma_l(n,\lambda)$, the minimizing coefficient at stage $l$ is $c_l = \lambda^*_l \sqrt{n \Delta_{l-1}}$.

The following result specifies $\bar{ \gamma}_{l}(n)$ for standard SPARCs, signed SPARCs, and $K$-sparse SPARCs.

\begin{prop}
\label{prop:gamma_expressions}
Consider a $K$-sparse SPARC  with $L$ sections and $M$ columns per section,  and let $\bar{ \gamma}_{l,K}(n)$ denote its distance-based distortion parameter, for $l \in [L]$.  Furthermore, let $\bar{ \gamma}_{l,\pm}(n)$  denote the corresponding parameter for the signed SPARC with the same values of $L,M$. Then,  letting $z_1, \dots, z_M \sim_{\text{i.i.d.}} \,   \normal(0, \Id_n)$, we have for $l \in [L]$:
\begin{itemize}[leftmargin=1.2em,itemsep=1pt, topsep=2pt, parsep=0pt]
   \item Standard SPARC:
    \[
     \bar{ \gamma}_{l,1}(n)  = \inf_{\lambda \in \bbR^+} \,  \E\left[  \min_{ 1 \le j \le M} \| e_1 - \lambda z_j\| \right] 
    \]
        
    \item Signed SPARC:  
    \[ \bar{ \gamma}_{l,\pm}(n)  = \inf_{\lambda \in \bbR^+} \E\left[  \min_{ 1 \le j \le M}  \, \min \left\{ \| e_1 - \lambda z_j\| , \,      \| e_1 + \lambda z_j\| \right\} \right]
     \]
    \item
$K$-sparse SPARC: 
\[\bar{ \gamma}_{l,K}(n)  = \inf_{\lambda \in \bbR^+} \!  \E\left[  \min_{ 1 \le i_1 < \ldots<i_K   \le M} \norm{ e_1 - \frac{\lambda}{ \sqrt{K}} \sum_{k=1}^K z_{i_k} } \right]
\]
\end{itemize}

\end{prop}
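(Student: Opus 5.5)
The plan is to compute $\gamma_l(n,\lambda)$ directly from its definition \eqref{eq:gamma_l_def} for each code, and then take the infimum over $\lambda$. The key observation is that for the three SPARC variants, every $b\in\cB_l$ has the form $b = (c_l/\sqrt{K})\sum_{k} s_k e_{j_k}$. Here the indices $j_k$ are distinct and lie in the $l$th section, and the signs are $s_k=1$, except for the signed SPARC, where $K=1$ and $s_1=\pm1$. Hence
\[
\frac{\lambda}{\|b\|}Ab = \frac{\lambda}{\sqrt{K}}\sum_{k=1}^K s_k A e_{j_k} = \frac{\lambda}{\sqrt K}\sum_{k=1}^K s_k a_{j_k},
\]
where $a_j$ denotes the $j$th column of $A$. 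The constant $c_l$ cancels, so $\gamma_l$ depends only on the support pattern and signs.

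First, I would note that the columns $a_{(l-1)M+1},\dots,a_{lM}$ of $A$ in section $l$ are i.i.d.\ $\normal(0,\Id_n)$. The minimum over $\cB_l$ involves only these $M$ columns, so I can relabel them as $z_1,\dots,z_M$ without changing the distribution of the minimized quantity. The columns outside section $l$ do not enter, so the expectation over $A$ reduces to an expectation over $z_1,\dots,z_M$.

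Second, I would handle each code by enumerating $\cB_l$.
\begin{itemize}
\item Standard SPARC: the minimum over $b$ becomes $\min_{j\le M}\|e_1-\lambda z_j\|$.
\item Signed SPARC: the sign choice $\pm c_l e_j$ gives $\min_j\min\{\|e_1-\lambda z_j\|,\|e_1+\lambda z_j\|\}$.
\item $K$-sparse SPARC: the $\binom{M}{K}$ supports $i_1<\dots<i_K$ give $\min_{i_1<\dots<i_K}\|e_1-\tfrac{\lambda}{\sqrt K}\sum_k z_{i_k}\|$.
\end{itemize}
Third, since $\bar\gamma_l(n)=\inf_{\lambda\in\bbR^+}\gamma_l(n,\lambda)$ by definition, taking the infimum of each of these expressions gives the three stated formulas. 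This also shows that $\bar\gamma_l$ does not depend on $l$.

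There is no real obstacle; the argument is a direct substitution. The only point needing care is the normalization $\|b\|=c_l$ in the $K$-sparse case. Each of the $K$ nonzero coefficients equals $c_l/\sqrt K$, so $\|b\|=\sqrt{K}\cdot c_l/\sqrt K=c_l$, and the factor $1/\sqrt K$ therefore appears in front of the sum. One should also check that the minimum is attained, since each $\cB_l$ is finite, so all expectations are well defined.
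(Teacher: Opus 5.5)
Your proposal is correct and follows the paper's own argument: substitute each $\mathcal{B}_l$ into the definition of $\gamma_l(n,\lambda)$, note that $\|b\| = c_l$ so the scale cancels, identify the section-$l$ columns of $A$ with i.i.d.\ standard Gaussian vectors $z_1,\dots,z_M$, and take the infimum over $\lambda$. Your explicit relabeling of the section-$l$ columns and your check of the $1/\sqrt{K}$ normalization are slightly more careful than the paper's write-up, but the route is the same.
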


For distance-based encoding, it is challenging to obtain tight bounds analogous to Theorem \ref{thm:D_corr_opt}  relating the deterministic centering $D_l$ to $\Delta^\mathrm{dist}_l$. However, the following result shows that for standard SPARCs, $\Delta^\mathrm{dist}_l$  is no larger than  $\Delta^\mathrm{corr}_l$. 
\begin{lemma}\label{lem:gammabar_UB}%
For a standard SPARC, we have
\begin{align}
\bar{ \gamma}^2_{l,1}(n)  &\le   \frac{n}{n+ \omega_{l,1}^2},
\end{align}
for all $(n,L,M)$. 
Thus,  the deterministic predictions under optimal power allocation satisfy 
\[
\Delta^\mathrm{dist}_L   \le \Delta^\mathrm{corr}_L.
\]
\end{lemma}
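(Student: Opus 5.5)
The plan is to bound the minimum over $j$ in $\gamma_{l,1}(n,\lambda)$ by evaluating it at a suboptimal but convenient index: the one chosen by the correlation rule. Let $j^\star = \argmax_j \langle e_1, z_j\rangle = \argmax_j z_{j,1}$, where $z_{j,1}$ denotes the first coordinate of $z_j$. By the definition of $\gamma_{l,1}$ in Proposition~\ref{prop:gamma_expressions}, for every $\lambda>0$,
\begin{align*}
\gamma_{l,1}(n,\lambda) \le \E\big[\|e_1-\lambda z_{j^\star}\|\big] \le \Big(\E\big[\|e_1-\lambda z_{j^\star}\|^2\big]\Big)^{1/2},
\end{align*}
where the second step is Jensen's inequality. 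Expanding the square gives
\begin{align*}
\|e_1-\lambda z_{j^\star}\|^2 = 1 - 2\lambda z_{j^\star,1} + \lambda^2 \|z_{j^\star}\|^2 .
\end{align*}

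The key structural fact is that $j^\star$ depends only on the first coordinates $(z_{1,1},\dots,z_{M,1})$. The remaining $n-1$ coordinates of $z_{j^\star}$ are therefore still i.i.d.\ $\normal(0,1)$ and independent of $z_{j^\star,1}$. This yields
\begin{align*}
\E\big[z_{j^\star,1}\big]=\E\Big[\max_j Z_j\Big]=\omega_{l,1}, \qquad
\E\big[\|z_{j^\star}\|^2\big] = (n-1) + \E\Big[\big(\max_j Z_j\big)^2\Big],
\end{align*}
where the first identity is Proposition~\ref{prop:Gaussian_width}.

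The main obstacle is the second moment $\E[(\max_j Z_j)^2]$. It equals $\omega_{l,1}^2+\mathrm{Var}(\max_j Z_j)$, and the Gaussian Poincaré inequality gives $\mathrm{Var}(\max_j Z_j)\le 1$, since the maximum is $1$-Lipschitz. Hence $\E\|z_{j^\star}\|^2 \le n+\omega_{l,1}^2$. Substituting,
\begin{align*}
\gamma^2_{l,1}(n,\lambda) \le 1 - 2\lambda\,\omega_{l,1} + \lambda^2 (n+\omega_{l,1}^2).
\end{align*}
Minimizing this quadratic over $\lambda>0$ at $\lambda=\omega_{l,1}/(n+\omega_{l,1}^2)$ gives the value $1-\omega_{l,1}^2/(n+\omega_{l,1}^2) = n/(n+\omega_{l,1}^2)$. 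Taking the infimum over $\lambda$ on the left then gives the claimed bound on $\bar\gamma^2_{l,1}(n)$.

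The second claim follows by multiplying these bounds over $l=1,\dots,L$. Comparing $\Delta^{\mathrm{dist}}_L=\sigma^2\prod_k\bar\gamma^2_{k,1}(n)$ with the expression \eqref{eq:Delta_corr_opt} for $\Delta^{\mathrm{corr}}_L$ gives $\Delta^{\mathrm{dist}}_L\le\Delta^{\mathrm{corr}}_L$.
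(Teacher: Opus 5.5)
Your proposal is correct and follows essentially the same route as the paper's proof. Both bound the minimum at the correlation-selected index $j^\star$, apply Jensen, control $\mathrm{Var}(\max_j Z_j)\le 1$ by the Gaussian Poincar\'e inequality, use independence of the remaining $n-1$ coordinates, minimize the quadratic at $\lambda=\omega_{l,1}/(n+\omega_{l,1}^2)$, and then multiply over $l$.
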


 \begin{figure}[t]
 \vspace{0.1in}
   \centering
  \begin{subfigure}[t]{\linewidth}
 \resizebox{0.9\columnwidth}{!}{
\begin{tikzpicture}

\definecolor{darkgray176}{RGB}{176,176,176}
\definecolor{lightgray204}{RGB}{204,204,204}

\begin{axis}[
legend cell align={left},
legend style={fill opacity=0.8, draw opacity=1, text opacity=1, draw=lightgray204, at={(0.02,0.02)}, anchor=south west, font=\small},
log basis y={10},
tick align=outside,
tick pos=left,
x grid style={darkgray176},
xlabel={Rate  $R$  (bits/sample)},
xmajorgrids,
xmin=0.325, xmax=4.175,
xminorgrids,
xtick style={color=black},
y grid style={darkgray176},
ylabel={Distortion},
ymajorgrids,
ymin=0.00300887727121396, ymax=0.938051787301897,
yminorgrids,
ymode=log,
ytick style={color=black},
ytick={0.0001,0.001,0.01,0.1,1,10},
yticklabels={
  \(\displaystyle {10^{-4}}\),
  \(\displaystyle {10^{-3}}\),
  \(\displaystyle {10^{-2}}\),
  \(\displaystyle {10^{-1}}\),
  \(\displaystyle {10^{0}}\),
  \(\displaystyle {10^{1}}\)
}
]
\path [draw=blue, semithick]
(axis cs:0.5,0.664364792482455)
--(axis cs:0.5,0.722555571720784);

\path [draw=blue, semithick]
(axis cs:1,0.418977558745072)
--(axis cs:1,0.491993591286258);

\path [draw=blue, semithick]
(axis cs:1.5,0.285474180449077)
--(axis cs:1.5,0.340951979390782);

\path [draw=blue, semithick]
(axis cs:2,0.197326850112581)
--(axis cs:2,0.237944262704054);

\path [draw=blue, semithick]
(axis cs:2.5,0.119980040591095)
--(axis cs:2.5,0.164035065644652);

\path [draw=blue, semithick]
(axis cs:3,0.0867367256129824)
--(axis cs:3,0.136440320424456);

\path [draw=blue, semithick]
(axis cs:3.5,0.0645751201485431)
--(axis cs:3.5,0.113242456584482);

\path [draw=blue, semithick]
(axis cs:4,0.0406346875099136)
--(axis cs:4,0.0902738003591816);

\addplot [semithick, blue, mark=-, mark size=3, mark options={solid}, only marks, forget plot]
table {%
0.5 0.664364792482455
1 0.418977558745072
1.5 0.285474180449077
2 0.197326850112581
2.5 0.119980040591095
3 0.0867367256129824
3.5 0.0645751201485431
4 0.0406346875099136
};
\addplot [semithick, blue, mark=-, mark size=3, mark options={solid}, only marks, forget plot]
table {%
0.5 0.722555571720784
1 0.491993591286258
1.5 0.340951979390782
2 0.237944262704054
2.5 0.164035065644652
3 0.136440320424456
3.5 0.113242456584482
4 0.0902738003591816
};
\addplot [semithick, blue, dashed, forget plot]
table {%
0.5 0.69314756774064
1 0.471815455364515
1.5 0.322914205401513
2 0.224537745988502
2.5 0.161210776233737
3 0.119646811384774
3.5 0.0933587640103745
4 0.0769797271103068
};
\path [draw=red, semithick]
(axis cs:0.5,0.644130404039382)
--(axis cs:0.5,0.709492421809836);

\path [draw=red, semithick]
(axis cs:1,0.405956679966402)
--(axis cs:1,0.478115887052486);

\path [draw=red, semithick]
(axis cs:1.5,0.273069949440136)
--(axis cs:1.5,0.332116983784496);

\path [draw=red, semithick]
(axis cs:2,0.175811451372258)
--(axis cs:2,0.208295240915256);

\path [draw=red, semithick]
(axis cs:2.5,0.108374518763108)
--(axis cs:2.5,0.157976227636161);

\path [draw=red, semithick]
(axis cs:3,0.0737206842072956)
--(axis cs:3,0.112871883671246);

\path [draw=red, semithick]
(axis cs:3.5,0.0491299915469181)
--(axis cs:3.5,0.076174448552666);

\path [draw=red, semithick]
(axis cs:4,0.0359342981964722)
--(axis cs:4,0.0545875728484759);

\addplot [semithick, red, mark=-, mark size=3, mark options={solid}, only marks, forget plot]
table {%
0.5 0.644130404039382
1 0.405956679966402
1.5 0.273069949440136
2 0.175811451372258
2.5 0.108374518763108
3 0.0737206842072956
3.5 0.0491299915469181
4 0.0359342981964722
};
\addplot [semithick, red, mark=-, mark size=3, mark options={solid}, only marks, forget plot]
table {%
0.5 0.709492421809836
1 0.478115887052486
1.5 0.332116983784496
2 0.208295240915256
2.5 0.157976227636161
3 0.112871883671246
3.5 0.076174448552666
4 0.0545875728484759
};
\addplot [semithick, red, dashed, forget plot]
table {%
0.5 0.677678162507811
1 0.45994075472581
1.5 0.313081155488831
2 0.212815503223919
2.5 0.145082920416897
3 0.098486489344316
3.5 0.0672746313945108
4 0.046370981447011
};
\addplot [semithick, black]
table {%
0.5 0.5
1 0.25
1.5 0.125
2 0.0625
2.5 0.03125
3 0.015625
3.5 0.0078125
4 0.00390625
};
\addlegendentry{$D^*(R)=\sigma^2\cdot 2^{-2R}$}
\addplot [semithick, blue, mark=square*, mark size=3, mark options={solid}]
table {%
0.5 0.69346018210162
1 0.455485575015665
1.5 0.31321307991993
2 0.217635556408318
2.5 0.142007553117873
3 0.111588523018719
3.5 0.0889087883665125
4 0.0654542439345476
};
\addlegendentry{Alloc. in \eqref{eq:asymp_PA}}
\addplot [semithick, red, mark=*, mark size=3, mark options={solid}]
table {%
0.5 0.676811412924609
1 0.442036283509444
1.5 0.302593466612316
2 0.192053346143757
2.5 0.133175373199635
3 0.093296283939271
3.5 0.0626522200497921
4 0.0452609355224741
};
\addlegendentry{Alloc. in \eqref{eq:c_corr}}
\end{axis}

\end{tikzpicture}}
    \caption{Effect of power allocation (standard SPARC, $M=16$, $L=100$).}
    \label{subfig:exp1}
  \end{subfigure}  

  \vspace{0.15in}
  \begin{subfigure}[t]{\linewidth}
\resizebox{0.9\columnwidth}{!}{
\begin{tikzpicture}

\definecolor{darkgray176}{RGB}{176,176,176}
\definecolor{green01270}{RGB}{0,127,0}
\definecolor{lightgray204}{RGB}{204,204,204}

\begin{axis}[
legend cell align={left},
legend style={fill opacity=0.8, draw opacity=1, text opacity=1, draw=lightgray204, at={(0.02,0.02)}, anchor=south west, font=\small},
log basis y={10},
tick align=outside,
tick pos=left,
x grid style={darkgray176},
xlabel={Rate  R  (bits/sample)},
xmajorgrids,
xmin=0.325, xmax=4.175,
xminorgrids,
xtick style={color=black},
y grid style={darkgray176},
ylabel={Distortion},
ymajorgrids,
ymin=0.00301117006425995, ymax=0.923165986728302,
yminorgrids,
ymode=log,
ytick style={color=black},
ytick={0.0001,0.001,0.01,0.1,1,10},
yticklabels={
  \(\displaystyle {10^{-4}}\),
  \(\displaystyle {10^{-3}}\),
  \(\displaystyle {10^{-2}}\),
  \(\displaystyle {10^{-1}}\),
  \(\displaystyle {10^{0}}\),
  \(\displaystyle {10^{1}}\)
}
]
\path [draw=blue, semithick]
(axis cs:0.5,0.608161303273335)
--(axis cs:0.5,0.678309900371406);

\path [draw=blue, semithick]
(axis cs:1,0.37997281345061)
--(axis cs:1,0.456139828589567);

\path [draw=blue, semithick]
(axis cs:1.5,0.236479054814252)
--(axis cs:1.5,0.288316155548998);

\path [draw=blue, semithick]
(axis cs:2,0.150109805164446)
--(axis cs:2,0.185253902916554);

\path [draw=blue, semithick]
(axis cs:2.5,0.0984420342364534)
--(axis cs:2.5,0.135251590131867);

\path [draw=blue, semithick]
(axis cs:3,0.0556187735261504)
--(axis cs:3,0.0903741529142878);

\path [draw=blue, semithick]
(axis cs:3.5,0.0395403400073385)
--(axis cs:3.5,0.0618748433902687);

\path [draw=blue, semithick]
(axis cs:4,0.0246710670843672)
--(axis cs:4,0.0378172845804423);

\addplot [semithick, blue, mark=-, mark size=3, mark options={solid}, only marks, forget plot]
table {%
0.5 0.608161303273335
1 0.37997281345061
1.5 0.236479054814252
2 0.150109805164446
2.5 0.0984420342364534
3 0.0556187735261504
3.5 0.0395403400073385
4 0.0246710670843672
};
\addplot [semithick, blue, mark=-, mark size=3, mark options={solid}, only marks, forget plot]
table {%
0.5 0.678309900371406
1 0.456139828589567
1.5 0.288316155548998
2 0.185253902916554
2.5 0.135251590131867
3 0.0903741529142878
3.5 0.0618748433902687
4 0.0378172845804423
};
\addplot [semithick, blue, dashed, forget plot]
table {%
0.5 0.652511977265292
1 0.426978111203673
1.5 0.279626942457861
2 0.184357742232788
2.5 0.121637817102641
3 0.0810932735448402
3.5 0.0526632597226105
4 0.0354997761085252
};
\path [draw=red, semithick]
(axis cs:0.5,0.605205147152337)
--(axis cs:0.5,0.677842640017162);

\path [draw=red, semithick]
(axis cs:1,0.38813273057601)
--(axis cs:1,0.454683928256606);

\path [draw=red, semithick]
(axis cs:1.5,0.233791987322846)
--(axis cs:1.5,0.308470084204378);

\path [draw=red, semithick]
(axis cs:2,0.149896345882057)
--(axis cs:2,0.196917048522825);

\path [draw=red, semithick]
(axis cs:2.5,0.086707127165941)
--(axis cs:2.5,0.131468020572435);

\path [draw=red, semithick]
(axis cs:3,0.0577514961784878)
--(axis cs:3,0.0991939704318931);

\path [draw=red, semithick]
(axis cs:3.5,0.0346317888994292)
--(axis cs:3.5,0.0626041763807875);

\path [draw=red, semithick]
(axis cs:4,0.0218481107271137)
--(axis cs:4,0.0464733729655437);

\addplot [semithick, red, mark=-, mark size=3, mark options={solid}, only marks, forget plot]
table {%
0.5 0.605205147152337
1 0.38813273057601
1.5 0.233791987322846
2 0.149896345882057
2.5 0.086707127165941
3 0.0577514961784878
3.5 0.0346317888994292
4 0.0218481107271137
};
\addplot [semithick, red, mark=-, mark size=3, mark options={solid}, only marks, forget plot]
table {%
0.5 0.677842640017162
1 0.454683928256606
1.5 0.308470084204378
2 0.196917048522825
2.5 0.131468020572435
3 0.0991939704318931
3.5 0.0626041763807875
4 0.0464733729655437
};
\addplot [semithick, red, dashed, forget plot]
table {%
0.5 0.650489663167039
1 0.424353029140187
1.5 0.277065140803057
2 0.182127217592908
2.5 0.119812851703936
3 0.0796491032848848
3.5 0.0515695775398472
4 0.0346674712752343
};
\path [draw=green01270, semithick]
(axis cs:0.5,0.623185373278025)
--(axis cs:0.5,0.71163130459629);

\path [draw=green01270, semithick]
(axis cs:1,0.436015877669416)
--(axis cs:1,0.513395275836129);

\path [draw=green01270, semithick]
(axis cs:1.5,0.26846891000761)
--(axis cs:1.5,0.346239383279799);

\path [draw=green01270, semithick]
(axis cs:2,0.187708646461731)
--(axis cs:2,0.250875176238902);

\path [draw=green01270, semithick]
(axis cs:2.5,0.114554490514782)
--(axis cs:2.5,0.165824961216452);

\path [draw=green01270, semithick]
(axis cs:3,0.0832998932237309)
--(axis cs:3,0.127773774817913);

\path [draw=green01270, semithick]
(axis cs:3.5,0.0584609780374251)
--(axis cs:3.5,0.111537797828543);

\path [draw=green01270, semithick]
(axis cs:4,0.0398271961836078)
--(axis cs:4,0.0680850100496281);

\addplot [semithick, green01270, mark=-, mark size=3, mark options={solid}, only marks, forget plot]
table {%
0.5 0.623185373278025
1 0.436015877669416
1.5 0.26846891000761
2 0.187708646461731
2.5 0.114554490514782
3 0.0832998932237309
3.5 0.0584609780374251
4 0.0398271961836078
};
\addplot [semithick, green01270, mark=-, mark size=3, mark options={solid}, only marks, forget plot]
table {%
0.5 0.71163130459629
1 0.513395275836129
1.5 0.346239383279799
2 0.250875176238902
2.5 0.165824961216452
3 0.127773774817913
3.5 0.111537797828543
4 0.0680850100496281
};
\addplot [semithick, green01270, dashed, forget plot]
table {%
0.5 0.681513828953579
1 0.467187391117033
1.5 0.320843472945376
2 0.221018584507154
2.5 0.154662764992936
3 0.108245116395036
3.5 0.0751715631203415
4 0.0533210634392871
};
\addplot [semithick, black]
table {%
0.5 0.5
1 0.25
1.5 0.125
2 0.0625
2.5 0.03125
3 0.015625
3.5 0.0078125
4 0.00390625
};
\addlegendentry{$D^*(R)=\sigma^2\cdot 2^{-2R}$}
\addplot [semithick, blue, mark=square*, mark size=3, mark options={solid}]
table {%
0.5 0.64323560182237
1 0.418056321020089
1.5 0.262397605181625
2 0.1676818540405
2.5 0.11684681218416
3 0.0729964632202191
3.5 0.0507075916988036
4 0.0312441758324048
};
\addlegendentry{Standard ($M=32$, $L=64$)}
\addplot [semithick, red, mark=*, mark size=3, mark options={solid}]
table {%
0.5 0.641523893584749
1 0.421408329416308
1.5 0.271131035763612
2 0.173406697202441
2.5 0.109087573869188
3 0.0784727333051905
3.5 0.0486179826401083
4 0.0341607418463287
};
\addlegendentry{Signed ($M=16$, $L=64$)}
\addplot [semithick, green01270, mark=triangle*, mark size=3, mark options={solid}]
table {%
0.5 0.667408338937158
1 0.474705576752772
1.5 0.307354146643704
2 0.219291911350317
2.5 0.140189725865617
3 0.105536834020822
3.5 0.0849993879329839
4 0.0539561031166179
};
\addlegendentry{2-sparse ($M=32$, $L=32$)}
\end{axis}

\end{tikzpicture}}
    \caption{Comparison between SPARC variants, using correlation-based encoding and the allocation in \eqref{eq:c_corr}. }
    \label{subfig:exp2}
  \end{subfigure}
    \caption{Empirical distortion (solid lines with markers) vs. theoretical predictions (dashed) for SPARCs and variants, with correlation-based encoding.  }
    \label{fig:corr_based}
      \end{figure}

    \begin{figure}[t]
    \vspace{0.1in}
\resizebox{0.9\columnwidth}{!}{
\begin{tikzpicture}

\definecolor{darkgray176}{RGB}{176,176,176}
\definecolor{lightgray204}{RGB}{204,204,204}

\begin{axis}[
legend cell align={left},
legend style={fill opacity=0.8, draw opacity=1, text opacity=1, draw=lightgray204, at={(0.02,0.02)}, anchor=south west, font=\small},
log basis y={10},
tick align=outside,
tick pos=left,
x grid style={darkgray176},
xlabel={Rate  $R$  (bits/sample)},
xmajorgrids,
xmin=0.325, xmax=4.175,
xminorgrids,
xtick style={color=black},
y grid style={darkgray176},
ylabel={Distortion},
ymajorgrids,
ymin=0.00302460521616318, ymax=0.84077178770757,
yminorgrids,
ymode=log,
ytick style={color=black},
ytick={0.0001,0.001,0.01,0.1,1,10},
yticklabels={
  \(\displaystyle {10^{-4}}\),
  \(\displaystyle {10^{-3}}\),
  \(\displaystyle {10^{-2}}\),
  \(\displaystyle {10^{-1}}\),
  \(\displaystyle {10^{0}}\),
  \(\displaystyle {10^{1}}\)
}
]
\path [draw=blue, semithick]
(axis cs:0.5,0.550591710615486)
--(axis cs:0.5,0.651008700084007);

\path [draw=blue, semithick]
(axis cs:1,0.304609952719905)
--(axis cs:1,0.439421651148815);

\path [draw=blue, semithick]
(axis cs:1.5,0.188944606104866)
--(axis cs:1.5,0.306080800946761);

\path [draw=blue, semithick]
(axis cs:2,0.117856948970928)
--(axis cs:2,0.197899048653793);

\path [draw=blue, semithick]
(axis cs:2.5,0.073161300489757)
--(axis cs:2.5,0.138991301533527);

\path [draw=blue, semithick]
(axis cs:3,0.0357788413908567)
--(axis cs:3,0.0966133474603846);

\path [draw=blue, semithick]
(axis cs:3.5,0.0334075187354218)
--(axis cs:3.5,0.0561190094133862);

\path [draw=blue, semithick]
(axis cs:4,0.0199568893945601)
--(axis cs:4,0.0405302483377607);

\addplot [semithick, blue, mark=-, mark size=3, mark options={solid}, only marks, forget plot]
table {%
0.5 0.550591710615486
1 0.304609952719905
1.5 0.188944606104866
2 0.117856948970928
2.5 0.073161300489757
3 0.0357788413908567
3.5 0.0334075187354218
4 0.0199568893945601
};
\addplot [semithick, blue, mark=-, mark size=3, mark options={solid}, only marks, forget plot]
table {%
0.5 0.651008700084007
1 0.439421651148815
1.5 0.306080800946761
2 0.197899048653793
2.5 0.138991301533527
3 0.0966133474603846
3.5 0.0561190094133862
4 0.0405302483377607
};
\addplot [semithick, blue, dashed, forget plot]
table {%
0.5 0.622651367605595
1 0.393013621355629
1.5 0.252761937800674
2 0.162625796081859
2.5 0.107463401006791
3 0.068936368575066
3.5 0.0471294588275088
4 0.031832058710274
};
\path [draw=red, semithick]
(axis cs:0.5,0.528852334300504)
--(axis cs:0.5,0.638939836982302);

\path [draw=red, semithick]
(axis cs:1,0.301869696354981)
--(axis cs:1,0.416646077659191);

\path [draw=red, semithick]
(axis cs:1.5,0.19365253833886)
--(axis cs:1.5,0.286695936479547);

\path [draw=red, semithick]
(axis cs:2,0.108606996227367)
--(axis cs:2,0.182826695729159);

\path [draw=red, semithick]
(axis cs:2.5,0.0607530081875977)
--(axis cs:2.5,0.10869071055679);

\path [draw=red, semithick]
(axis cs:3,0.0407145866512687)
--(axis cs:3,0.0662643005777515);

\path [draw=red, semithick]
(axis cs:3.5,0.0212111052522857)
--(axis cs:3.5,0.0360613234336784);

\path [draw=red, semithick]
(axis cs:4,0.0150691534779206)
--(axis cs:4,0.0255416585488196);

\addplot [semithick, red, mark=-, mark size=3, mark options={solid}, only marks, forget plot]
table {%
0.5 0.528852334300504
1 0.301869696354981
1.5 0.19365253833886
2 0.108606996227367
2.5 0.0607530081875977
3 0.0407145866512687
3.5 0.0212111052522857
4 0.0150691534779206
};
\addplot [semithick, red, mark=-, mark size=3, mark options={solid}, only marks, forget plot]
table {%
0.5 0.638939836982302
1 0.416646077659191
1.5 0.286695936479547
2 0.182826695729159
2.5 0.10869071055679
3 0.0662643005777515
3.5 0.0360613234336784
4 0.0255416585488196
};
\addplot [semithick, red, dashed, forget plot]
table {%
0.5 0.617297939786157
1 0.378954107605365
1.5 0.233506406665893
2 0.141395681368193
2.5 0.0866223442918267
3 0.0505062535230465
3.5 0.0310635874051479
4 0.0188390578468854
};
\addplot [semithick, black]
table {%
0.5 0.5
1 0.25
1.5 0.125
2 0.0625
2.5 0.03125
3 0.015625
3.5 0.0078125
4 0.00390625
};
\addlegendentry{$D^*(R)=\sigma^2\cdot 2^{-2R}$}
\addplot [semithick, blue, mark=square*, mark size=3, mark options={solid}]
table {%
0.5 0.600800205349747
1 0.37201580193436
1.5 0.247512703525813
2 0.15787799881236
2.5 0.106076301011642
3 0.0661960944256206
3.5 0.044763264074404
4 0.0302435688661604
};
\addlegendentry{Corr-based, \eqref{eq:c_corr}}
\addplot [semithick, red, mark=*, mark size=3, mark options={solid}]
table {%
0.5 0.583896085641403
1 0.359257887007086
1.5 0.240174237409204
2 0.145716845978263
2.5 0.0847218593721937
3 0.0534894436145101
3.5 0.0286362143429821
4 0.0203054060133701
};
\addlegendentry{Dist-based, \eqref{eq:Delta_l_dist}}
\end{axis}

\end{tikzpicture}}
    \caption{Correlation- vs distance-based encoding, with optimal allocation for each (standard SPARC, $M=128$, $L=16$).}
    \label{fig:corr_v_dist}
 \end{figure}

\section{Numerical Results}
\label{sec:experiments}

In Figures \ref{fig:corr_based} and \ref{fig:corr_v_dist}, we present three numerical experiments for an i.i.d.\ standard Gaussian source ($\sigma^2=1$). The matrix $A$ defining the code is also standard   Gaussian. Solid lines with markers show the average distortion over $10$ trials per rate point, with error bars indicating one standard deviation. Dashed lines show the theoretical distortion predictions; the solid black curve is the optimal distortion-rate function.

\emph{Effect of power allocation} ($M=16, L=100$):  Fig.~\ref{subfig:exp1} compares the allocation of \cite{RVGaussFeasible} (in \eqref{eq:asymp_PA}) with the optimal one in \eqref{eq:c_corr}, for the standard SPARC with correlation-based encoding. The  allocation in \eqref{eq:c_corr} achieves lower distortion across all rates. The gap is wider at higher rates,  due to $n = \frac{L \log M}{R} $ growing smaller,   hence closer to $\log M$ (see \eqref{eq:cl_heuristic}). Both theoretical predictions track the empirical curves closely.

\emph{SPARC variant comparison}:  Fig.~\ref{subfig:exp2} compares three SPARC variants, all using correlation-based encoding and the optimal allocation in \eqref{eq:c_corr}: the standard SPARC ($M=32$, $L=64$), the signed SPARC ($M=16$, $L=64$), and the 2-sparse SPARC ($M=32$, $L=32$). Thus, at each rate, the block length $n$ is the same for the standard and signed SPARCs, and about $10\%$ smaller for the 2-sparse SPARC.

The standard and signed SPARCs have very similar distortion-rate tradeoffs, suggesting that using  signed coefficients we can halve the size of the design matrix, with essentially no performance penalty.  The similar performance of standard and signed SPARCs can be understood by comparing \eqref{eq:w_sparc_asymp} and \eqref{eq:w_sparc_pm_asymp}: a signed SPARC with $M/2$ columns per section has $\omega_{l,\pm} = \psi(M) + \mathcal O(1/\log M)$, which is the width of a standard SPARC with $M$ columns per section to the same order.

In contrast, the 2-sparse SPARC (green) significantly underperforms the standard SPARC. Moreover, we found that the performance gap does not significantly improve  even with ($M=32,L=64$) for the 2-SPARC, which doubles its original block length. 
This can again be understood by comparing the asymptotic expansions for their Gaussian widths in \eqref{eq:w_sparc_asymp} and \eqref{eq:w_2sparc_asymp}: matching the number of codewords per section by using $M$ columns for the 2-SPARC and $\binom{M}{2}$ for the standard SPARC, both widths agree to leading order, since $\sqrt{2 \log \binom M2} = 2\sqrt{\log M} + \mathcal O(1/\sqrt{\log M})$, but the width of the 2-SPARC is smaller at the next order by
\begin{align*}
\omega_{l,1}\big(\tbinom{M}{2}\big) - \omega_{l,2}(M) = \frac{1}{4}\, \frac{\log \log M}{\sqrt{\log M}} + \mathcal O\!\left( \frac{1}{\sqrt{\log M}}\right).
\end{align*}

\emph{Correlation- vs distance-based encoding}:  Fig.~\ref{fig:corr_v_dist} compares correlation- and distance-based encoding, with each encoder given its optimal power allocation (\eqref{eq:c_corr} and via \eqref{eq:Delta_l_dist}, respectively).  Distance-based encoding achieves noticeably lower distortion at rates above $R \approx 1.5$ bits/sample. We note that the difference in performance between correlation- and distance-based encoders is much more pronounced when $L$ is small compared to $M$.  Indeed, in the extreme case of $L=1$, distance-based encoding is optimal. As $L$ is increased with $M$ kept constant, the gap between the two encoders shrinks, indicating that the sub-optimality of the greedy procedure becomes more significant than the difference between correlation- vs. distance-based selection rules within each section.

\section{Discussion}
\label{sec:discussion}

We studied additive orthogonal regression codes under two types of greedy encoding.
For any code in this family, our results specify an optimal choice of coefficients (based on the finite length code parameters and the type of encoder), and a deterministic prediction for the distortion with tight concentration guarantees. 
This significantly generalizes the result in \cite{RVGaussFeasible},  obtained for standard SPARCs with a specific power allocation under correlation-based encoding.  

Our results showed that with correlation-based encoding, the distortion-rate tradeoff of signed-SPARCs is nearly identical to that of standard SPARCs, allowing a 50\% reduction in the size of the design matrix. By contrast, the $2$-sparse SPARC was observed to have  higher distortion across all rates.  A key reason is that in the $2$-sparse SPARC, the two nonzero coefficients in each stage $l$ are both set equal to $\frac{c_l}{\sqrt 2}$. Allowing unequal coefficients could potentially improve the distortion-rate tradeoff.  Our results highlight the crucial role of the Gaussian width  \eqref{eq:omega_l_def} in determining the performance with correlation-based encoding, and an analogous quantity  \eqref{eq:gamma_l_def} for distance-based encoding. Using these to investigate the distortion-rate tradeoff for other examples of additive orthogonal regression codes is an interesting direction for future work.

\iflongpaper
\section*{Acknowledgement}
We used AI tools (Claude) to write code for the numerical experiments, to obtain the asymptotic expansions in Proposition \ref{prop:Gaussian_width}, and for final proofreading.
\fi


\bibliography{ITW2026_arxiv.bbl}


\iflongpaper



\appendices

\section{Gaussian concentration and iterated systems}


A  centered (i.e., zero-mean) random variable $X$ is said to be sub-Gaussian with variance proxy $\nu \in \bbR_+$ if 
\[
\log \E[ e^{ \lambda X}] \le  \frac{ \lambda^2\nu}{2} , \quad \text{for all $\lambda \in \bbR$}. 
\]

\begin{lemma}[\!\!{\cite[Section~2.3]{boucheron2013concentration}}]\label{lem:sub-Gaussian} If $X$ is sub-Gaussian with variance proxy $\nu$, then $\var(X)\le \nu$ and 
\[
\P( X - \E[X] \ge  t) \le \exp\left\{  - \frac{t^2}{2 \nu}\right\} , \quad \text{for all 
$t \in \bbR_+$}
\]
\end{lemma}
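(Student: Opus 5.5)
Since the statement is a standard fact cited from \cite[Section~2.3]{boucheron2013concentration}, the plan is the usual two-part argument, both parts following from the defining inequality $\log \E[e^{\lambda X}] \le \lambda^2 \nu/2$. Here $X$ is centered, so $\E[X]=0$ and the tail statement concerns $\P(X \ge t)$.

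\emph{Tail bound.} I would apply the Chernoff method. For any $\lambda > 0$, Markov's inequality applied to $e^{\lambda X}$ gives
\[
\P(X \ge t) \le e^{-\lambda t}\,\E[e^{\lambda X}] \le \exp\{-\lambda t + \lambda^2 \nu/2\}.
\]
I then optimize over $\lambda$. The choice $\lambda = t/\nu$ (valid for $\nu>0$ and $t \ge 0$) yields $\exp\{-t^2/(2\nu)\}$. The degenerate case $\nu = 0$ is handled separately: there $\E[e^{\lambda X}] \le 1$ for all $\lambda$, and combined with Jensen's inequality and $\E[X]=0$ this forces $X = 0$ a.s., so the bound holds trivially with the convention $e^{-t^2/0}=0$ for $t>0$. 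The case $t=0$ is immediate.

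\emph{Variance bound.} Set $\phi(\lambda) = \E[e^{\lambda X}]$. The sub-Gaussian condition makes $\phi$ finite on all of $\bbR$, so $\phi$ is smooth and can be differentiated under the expectation. Near $\lambda=0$, using $\E[X]=0$,
\[
\phi(\lambda) = 1 + \tfrac{\lambda^2}{2}\E[X^2] + o(\lambda^2),
\]
while the hypothesis gives $\phi(\lambda) \le e^{\lambda^2\nu/2} = 1 + \lambda^2\nu/2 + o(\lambda^2)$. Subtracting 1, dividing by $\lambda^2$, and letting $\lambda\to0$ gives $\E[X^2] \le \nu$, i.e.\ $\var(X) \le \nu$.

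The only step needing care is justifying the second-order expansion of the moment generating function at $0$. Because $\phi$ is finite in a neighborhood of $0$, $X$ has moments of all orders, and dominated convergence (using $|X|^k e^{|\lambda X|}$ integrable for small $|\lambda|$) gives $\phi''(0) = \E[X^2]$. Alternatively, one can avoid derivatives altogether: from $\E[e^{\lambda X}] + \E[e^{-\lambda X}] \le 2e^{\lambda^2\nu/2}$ and $\cosh(u) \ge 1 + u^2/2$, we get $1 + \lambda^2\E[X^2]/2 \le e^{\lambda^2\nu/2}$, and letting $\lambda \to 0$ gives the same conclusion. I would use this second route since it is cleaner.
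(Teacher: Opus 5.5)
Your proposal is correct and is essentially the standard argument behind the cited result in \cite[Section~2.3]{boucheron2013concentration}; the paper itself gives no proof, only the citation. The tail bound is the Chernoff bound optimized at $\lambda = t/\nu$, and the variance bound comes from letting $\lambda \to 0$ in the moment generating function inequality. Your $\cosh(u) \ge 1 + u^2/2$ route is a clean way to do this without justifying differentiation under the expectation, and it also settles the case $\nu = 0$ at once, since $\nu = 0$ forces $X = 0$ almost surely.
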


\begin{lemma}[Gaussian Concentration~{\cite[Theorem~5.5]{boucheron2013concentration}}]\label{lem:Gaussian_concentration}
Let $Z \sim \normal(0 , \Id_d)$ and let $f \colon \bbR^d \to \bbR$ be $c$-Lipschitz (with respect to the Euclidean norm). Then $f(Z) - \E[f(Z)]$ is sub-Gaussian with variance proxy $c^2$.
\end{lemma}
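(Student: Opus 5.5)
The plan is the classical Herbst argument, driven by the Gaussian logarithmic Sobolev inequality. Throughout, $\gamma_d$ denotes the law of $Z\sim\normal(0,\Id_d)$ and $\mathrm{Ent}(g)\coloneqq\E[g\log g]-\E[g]\log\E[g]$ for $g\ge 0$.

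\emph{Reduction to smooth $f$.} We first assume $f$ is $C^1$ with $\|\nabla f\|\le c$ everywhere. For general $f$, set $f_\delta = f * \phi_\delta$, where $\phi_\delta$ is a Gaussian mollifier.
\begin{itemize}
\item Each $f_\delta$ is smooth and $c$-Lipschitz, so $\|\nabla f_\delta\|\le c$.
\item $|f_\delta - f|\le c\,\E\|\delta W\|\to 0$ uniformly, where $W\sim\normal(0,\Id_d)$.
\item A $c$-Lipschitz $f$ grows at most linearly, so all exponential moments $\E[e^{\lambda f(Z)}]$ are finite.
\end{itemize}
The bound $\log\E[e^{\lambda(f_\delta-\E f_\delta)}]\le\lambda^2c^2/2$ then passes to $f$ by dominated convergence as $\delta\to 0$.

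\emph{Log-Sobolev inequality.} The key input is, for smooth $g$ with suitable integrability,
\begin{align*}
\mathrm{Ent}(g^2)\le 2\,\E\big[\|\nabla g(Z)\|^2\big].
\end{align*}
I would prove it with the Ornstein--Uhlenbeck semigroup $P_t h(x)=\E[h(e^{-t}x+\sqrt{1-e^{-2t}}\,Z)]$. For positive $h$, write the entropy as an integral along the semigroup:
\begin{align*}
\mathrm{Ent}(h)=\int_0^\infty \E\Big[\frac{\|\nabla P_t h\|^2}{P_t h}\Big]\,dt .
\end{align*}
This uses $\frac{d}{dt}\E[P_th\log P_th]=-\E[\|\nabla P_th\|^2/P_th]$, which follows from Gaussian integration by parts. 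Next use the commutation $\nabla P_t h=e^{-t}P_t\nabla h$ and Cauchy--Schwarz in the form $\|P_t\nabla h\|^2\le P_th\cdot P_t(\|\nabla h\|^2/h)$. Together with invariance of $\gamma_d$ under $P_t$, these give
\begin{align*}
\mathrm{Ent}(h)\le \int_0^\infty e^{-2t}\,dt\;\E\big[\|\nabla h\|^2/h\big]=\tfrac12\,\E\big[\|\nabla h\|^2/h\big].
\end{align*}
Setting $h=g^2$ yields the inequality with constant $2$.

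\emph{Herbst argument.} Assume without loss of generality $\E f(Z)=0$, and let $F(\lambda)=\E[e^{\lambda f(Z)}]$. Apply the log-Sobolev inequality to $g=e^{\lambda f/2}$. Since $\|\nabla g\|^2=\tfrac{\lambda^2}{4}\|\nabla f\|^2e^{\lambda f}\le\tfrac{\lambda^2c^2}{4}e^{\lambda f}$, we get
\begin{align*}
\lambda F'(\lambda)-F(\lambda)\log F(\lambda)=\mathrm{Ent}(e^{\lambda f})\le \tfrac{\lambda^2c^2}{2}F(\lambda).
\end{align*}
Set $H(\lambda)=\lambda^{-1}\log F(\lambda)$ for $\lambda>0$. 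The display is equivalent to $H'(\lambda)\le c^2/2$, and $H(0^+)=F'(0)/F(0)=\E f=0$. Integrating gives $\log F(\lambda)\le\lambda^2c^2/2$ for $\lambda>0$. Applying the same argument to $-f$ covers $\lambda<0$, which is exactly sub-Gaussianity with variance proxy $c^2$.

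\emph{Main obstacle.} The delicate step is the log-Sobolev inequality with the sharp constant. Cruder routes lose constants: Maurey--Pisier interpolation, for instance, only gives variance proxy $\pi^2c^2/4$. The technical points are justifying differentiation under the expectation and the $t\to\infty$ limit $P_th\to\E h$. I would handle these by first treating bounded smooth $h$ bounded away from $0$ with bounded gradient, then approximating $e^{\lambda f}$ by truncation. The finite exponential moments of Lipschitz functions justify passing to the limit.
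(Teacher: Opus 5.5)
Your proof is correct and matches the source: the paper does not prove this lemma but imports it from Boucheron--Lugosi--Massart, where it is obtained as you do, by reducing to smooth $f$, applying the Gaussian log-Sobolev inequality to $e^{\lambda f/2}$, and running Herbst's argument. The one difference is how you establish the log-Sobolev inequality. The reference, as I recall it, derives it from the two-point (Rademacher) inequality via tensorization and the central limit theorem, whereas your Ornstein--Uhlenbeck computation (using $\nabla P_t h = e^{-t}P_t\nabla h$ together with Cauchy--Schwarz) gives it directly with the sharp constant, which is equally valid.
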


The following lemma shows that Lipschitz iterated systems driven by Gaussian noise yield sub-Gaussian outputs.

\begin{lemma}\label{lem:iter_con}
Let $Z_1,\dots, Z_L$ be i.i.d.\ $\normal(0, \Id_d)$ and let $X_1, \dots, X_L \in \bbR$ be generated by 
\begin{align*}
X_k &= f_k(X_{k-1}, Z_k),  \quad k=1, \dots, L 
\end{align*}
where $X_0= x_0\in \bbR$ is a deterministic  initialization. Suppose each $f_k\colon \bbR \times \bbR^{d} \to \bbR$ is $1$-Lipschitz in its first argument and $c_k$-Lipschitz in its second argument. Then, $X_l -\E[X_l]$ is sub-Gaussian with variance proxy $\sum_{k=1}^l c^2_k$.  Moreover $\E[X_l]$ is a $1$-Lipschitz function of $x_0$.
\end{lemma}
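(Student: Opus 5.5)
The plan is to prove the sub-Gaussian bound by the martingale (Doob) method, using the Gaussian concentration of Lemma~\ref{lem:Gaussian_concentration} one step at a time and conditioning on the past. Let $\mathcal F_k=\sigma(Z_1,\dots,Z_k)$ and define $g_l(x)\coloneqq x$ and, for $k<l$, the backward functions
\[
g_k(x)\coloneqq \E\big[X_l \,\big|\, X_k=x\big]=\E\big[g_{k+1}(f_{k+1}(x,Z_{k+1}))\big].
\]
The second expression uses that the recursion is Markov with independent driving noise. Then $\E[X_l\mid\mathcal F_k]=g_k(X_k)$, and $\E[X_l]=g_0(x_0)$.

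\textbf{Lipschitz property of $g_k$.} By backward induction each $g_k$ is $1$-Lipschitz. Indeed, if $g_{k+1}$ is $1$-Lipschitz, then
\[
|g_{k+1}(f_{k+1}(x,z))-g_{k+1}(f_{k+1}(x',z))|\le|f_{k+1}(x,z)-f_{k+1}(x',z)|\le|x-x'|
\]
for every fixed $z$. Taking expectations over $z$ preserves this bound. This gives the final claim, that $\E[X_l]=g_0(x_0)$ is a $1$-Lipschitz function of $x_0$.

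\textbf{Conditional MGF bound.} Fix $X_{k-1}$, i.e.\ condition on $\mathcal F_{k-1}$. The map $z\mapsto g_k(f_k(X_{k-1},z))$ is then $c_k$-Lipschitz, since it composes a $c_k$-Lipschitz map with a $1$-Lipschitz one. Since $Z_k$ is independent of $\mathcal F_{k-1}$, Lemma~\ref{lem:Gaussian_concentration} applies conditionally. The increment
\[
D_k\coloneqq g_k(X_k)-g_{k-1}(X_{k-1})=g_k(f_k(X_{k-1},Z_k))-\E\big[g_k(f_k(X_{k-1},Z_k))\,\big|\,\mathcal F_{k-1}\big]
\]
therefore satisfies $\E[e^{\lambda D_k}\mid\mathcal F_{k-1}]\le e^{\lambda^2c_k^2/2}$ almost surely.

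\textbf{Iterating.} Write $X_l-\E[X_l]=\sum_{k=1}^l D_k$ and peel off conditional expectations from $k=l$ down to $k=1$:
\[
\E\big[e^{\lambda\sum_{k\le l}D_k}\big]=\E\Big[e^{\lambda\sum_{k<l}D_k}\,\E\big[e^{\lambda D_l}\,\big|\,\mathcal F_{l-1}\big]\Big]\le e^{\lambda^2c_l^2/2}\,\E\big[e^{\lambda\sum_{k<l}D_k}\big]\le\cdots\le e^{\lambda^2\sum_{k\le l}c_k^2/2}.
\]
This is exactly sub-Gaussianity with variance proxy $\sum_{k=1}^l c_k^2$.

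\textbf{Main obstacle.} The only delicate point is measure-theoretic: justifying the conditional application of Gaussian concentration and the identity $\E[X_l\mid\mathcal F_k]=g_k(X_k)$. Both follow from independence of $Z_k$ from $\mathcal F_{k-1}$, via the freezing/substitution lemma for conditional expectations. Integrability is not an issue, because the Lipschitz bounds give at most linear growth in Gaussian variables. Measurability of the $g_k$ follows from Fubini.
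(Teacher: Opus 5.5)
Your argument is correct, but it does not follow the paper's route. The paper never conditions on the past. It unrolls the recursion into a single map $X_L=F(Z_1,\dots,Z_L)$ and uses a hybrid argument: perturb one $z_k$ at a time, so the change enters with size at most $c_k\|z_k-z_k'\|$ and is not amplified later because each subsequent $f_j$ is $1$-Lipschitz in its first argument. This gives $|F(z)-F(z')|\le\sum_k c_k\|z_k-z_k'\|$. Cauchy--Schwarz converts it into a Lipschitz constant $(\sum_k c_k^2)^{1/2}$ on $\bbR^{dL}$, and Lemma~\ref{lem:Gaussian_concentration} is applied once to the joint Gaussian vector. The statement about $\E[X_l]$ is then read off pathwise: for fixed $z$ the output is $1$-Lipschitz in $x_0$, which is the same fact your backward induction on $g_k$ encodes.

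You instead form the Doob martingale $g_k(X_k)=\E[X_l\mid\cF_k]$, apply Gaussian concentration conditionally in dimension $d$ at each step, and chain the conditional MGF bounds, so the per-step proxies $c_k^2$ simply add.

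The paper's version is shorter and avoids the conditional-expectation and measurability bookkeeping you flag as the delicate point. It also gives the stronger structural fact that $X_L$ is a Lipschitz functional of the whole Gaussian vector, so any Gaussian functional inequality applies to it directly. Your version never uses concentration on the product space, only a one-step inequality. It would therefore survive replacing the Gaussian noise by independent vectors for which Lipschitz functions are merely sub-Gaussian, a property that, unlike Gaussian concentration, need not tensorize in a dimension-free way.
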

\begin{IEEEproof} It suffices to prove the result for $l=L$. By recursively expanding the definition of $X_L$, we can write 
\begin{align*}
X_L
&= f_L(X_{L-1}, Z_L) \\
&= f_L(f_{L-1}(X_{L-2}, Z_{L-1}), Z_L)  \\
&= \cdots =: F(Z_1,\dots,Z_L),
\end{align*}
for some  $F\colon (\mathbb{R}^d)^L \to \mathbb{R}$. We show that $F$ is Lipschitz.
 
Fix $k \in [L]$ and let $ (z_1,\dots,z_L)$ and $(z_1,\dots,z_k',\dots,z_L)$ differ only at index $k$. Let $(x_1, \dots, x_L)$ and $(x_1, \dots, x_{k-1},    x_k' ,\dots x'_L)$ be the corresponding sequences. At index $k$, we have
\[
|x_k -x_k'| = |f_k(x_{k-1}, z_k) - f_k(x_{k-1}, z_k')|
\le c_k \|z_k - z_k'\|.
\]
For $l > k$, using that $f_l$ is $1$-Lipschitz in its first argument,
\[
|x_l - x_l'| = |f_l(x_{l-1}, z_l) - f_l(x_{l-1}', z_l)| \le |x_{l-1} - x_{l-1}'|.
\]
Hence, $|x_L - x_L'| \le c_k \|z_k - z_k'\|$, which verifies 
\[
|F(z_1, \dots, z'_k, \dots, z_L) - F(z_1, \dots, z_L)|
\le c_k \|z_k - z_k'\|.
\]

For general $z= (z_1,\dots, z_L)$ and $z' = (z_1', \dots, z_L')$ define $z^{(k)} \coloneqq (z_1',\dots,z_k',z_{k+1},\dots,z_L)$. Then, 
\begin{align*}
 |F(z) - F(z')|
&\le \sum_{k=1}^L |F(z^{(k)}) - F(z^{(k-1)})| \\
&\le \sum_{k=1}^L c_k \|z_k - z_k'\| \\
&\le \Big(\sum_{k=1}^L c_k^2\Big)^{1/2}
\Big(\sum_{k=1}^L \|z_k - z_k'\|^2\Big)^{1/2}.
\end{align*}
Hence $F$ is Lipschitz with constant $\big(\sum_{k=1}^L c_k^2\big)^{1/2}$.  By Gaussian concentration (Lemma~\ref{lem:Gaussian_concentration}), $F(Z_1,\dots,Z_L)$ is sub-Gaussian with variance proxy $\sum_{k=1}^L c_k^2$. 

Finally, for fixed $z_1, \dots, z_L$, we see that $X_L$ is a $1$-Lipschitz function of $x_0$. Thus, its expectation is also  $1$-Lipschitz.
\end{IEEEproof}

\section{The mean of a Gaussian order statistic}

\begin{lemma}\label{lem:order_stat_mean}
Let $Z_1, \dots, Z_M$ be i.i.d.\ $\normal(0,1)$ with order statistics $Z_{(1)} \le \cdots \le Z_{(M)}$, let $\psi(\cdot)$ be as in Proposition~\ref{prop:Gaussian_width}, and set $\rho_M \coloneqq (\log \log M)^2 (\log M)^{-3/2}$. Then, for each $i \ge 1$, as $M \to \infty$,
\begin{align}
\E\big[ Z_{(M-i+1)}\big] & = \psi(M) - \frac{H_{i-1}}{\sqrt{2 \log M}} + \mathcal O(\rho_M), \label{eq:lem_topi} \\
\E\big[ \textstyle \max_{1 \le j \le M} \abs{Z_j} \big] & = \psi(2M) + \mathcal O(\rho_M), \label{eq:lem_absmax}
\end{align}
where \eqref{eq:lem_topi} holds with implied constant depending only on $i$.
\end{lemma}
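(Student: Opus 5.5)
We derive both expansions from the Rényi representation of exponential order statistics together with a second-order expansion of the Gaussian upper-tail quantile. Write $Z_j = \Phi^{-1}(U_j)$ with $U_j$ i.i.d.\ uniform, and set $E_j = -\log(1-U_j)$, which are i.i.d.\ $\mathrm{Exp}(1)$. Then $Z_{(M-i+1)} = G(E_{(M-i+1)})$, where $G(e) \coloneqq \Phi^{-1}(1-e^{-e})$ is increasing. By Rényi's representation, $E_{(M-i+1)} = \sum_{j=i}^{M} W_j/j$ with independent $W_j \sim \mathrm{Exp}(1)$. Hence
\begin{align*}
\E[E_{(M-i+1)}] = H_M - H_{i-1} = \log M + \gamma - H_{i-1} + \mathcal{O}(1/M),
\end{align*}
and $\var(E_{(M-i+1)}) = \sum_{j \ge i} j^{-2} = \mathcal{O}(1)$. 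The fluctuations therefore enter only through $G''$, which is small.

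\textbf{Quantile expansion.} We use the classical inversion of the Mills-ratio asymptotics $1-\Phi(x) = \phi(x)x^{-1}\bigl(1+\mathcal{O}(x^{-2})\bigr)$. Solving $-\log(1-\Phi(x)) = e$ gives
\begin{align*}
G(e) = \sqrt{2e} - \frac{\log(4\pi e)}{2\sqrt{2e}} + \mathcal{O}\!\left(\frac{(\log e)^2}{e^{3/2}}\right),
\end{align*}
with derivatives $G'(e) = (2e)^{-1/2}\bigl(1 + \mathcal{O}(\log e/e)\bigr)$ and $G''(e) = \mathcal{O}(e^{-3/2})$. Taylor expanding $G$ around $e_0 = \log M$ with increment $\delta = E_{(M-i+1)} - \log M$ gives $G(e_0) + G'(e_0)\,\E\delta + \mathcal{O}(e_0^{-3/2}\,\E\delta^2)$. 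Here $\E\delta = \gamma - H_{i-1} + o(1)$, and $\E\delta^2 = \mathcal{O}(1)$ for fixed $i$.

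Substituting, the term $G(\log M)$ produces $\sqrt{2\log M} - \log(4\pi\log M)/(2\sqrt{2\log M})$. The term $G'(e_0)\gamma$ produces $+2\gamma/(2\sqrt{2\log M})$, which together reproduce $\psi(M)$. The remaining term $G'(e_0)\cdot(-H_{i-1})$ gives $-H_{i-1}/\sqrt{2\log M}$. All remainders are $\mathcal{O}(\rho_M)$, which establishes \eqref{eq:lem_topi}.

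\textbf{Main obstacle.} The Taylor step needs justification on the event where $E_{(M-i+1)}$ is far from $\log M$, in particular when $E_{(M-i+1)}$ is small and the bounds on $G''$ fail. We handle this by truncation to the event $|\delta| \le \tfrac12 \log M$, on which $G''$ is uniformly $\mathcal{O}((\log M)^{-3/2})$. On the complement we use:
\begin{itemize}
\item exponential tails of $\delta$: the upper tail is at most that of $\mathrm{Exp}(1)$ shifted, and the lower tail decays like $\exp(-cM^{1/2})$ since $\P(E_{(M-i+1)} \le \tfrac12 \log M) \le \binom{M}{i-1}(1-M^{-1/2})^{M-i+1}$;
\item the fact that $|Z_{(M-i+1)}|$ has all moments bounded by $\mathcal{O}(\log M)$.
\end{itemize}
By Cauchy–Schwarz, the complement then contributes $o(\rho_M)$.

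\textbf{Absolute maximum.} For \eqref{eq:lem_absmax}, note that $|Z_j|$ has tail $2(1-\Phi(x))$. Hence $\max_j |Z_j| = \tilde G(E_{(M)})$, where $\tilde G(e) = \Phi^{-1}(1-e^{-e}/2) = G(e + \log 2)$. Since $E_{(M)} + \log 2$ has mean $\log(2M) + \gamma + \mathcal{O}(1/M)$, the same argument with $e_0 = \log(2M)$ gives $\psi(2M) + \mathcal{O}(\rho_{2M})$. Finally, $\rho_{2M} = \mathcal{O}(\rho_M)$.
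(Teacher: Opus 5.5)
Your proposal is correct and follows essentially the paper's route: your $E_{(M-i+1)}$ is the paper's $-\log U_{(i)}$, so $\delta=-W_i$. Rényi's representation simply supplies the mean and variance facts the paper quotes from the Beta distribution, and the truncation to $|\delta|\le \tfrac12\log M$ with Cauchy--Schwarz on the complement, as well as the $\log 2$ shift for $\max_j|Z_j|$, are exactly as in the paper. The one step to tighten is your bounds on $G'$ and $G''$: justify them directly from $G'(e)=\bar\Phi(G(e))/\phi(G(e))$ (the Mills ratio at $G(e)$) rather than by differentiating an asymptotic expansion, or else do as the paper does and apply the quantile expansion pointwise at $v=m-w$, Taylor-expanding only its explicit leading terms uniformly for $|w|\le m/2$.
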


For $i = 1$, \eqref{eq:lem_topi} is the classical expansion for the mean of a Gaussian maximum \cite{hall1979rate,GASULL2015376}. The general case is presumably also standard, but as we have not located a reference that states it in this form we give a short proof. 

\begin{IEEEproof} Let $\bar\Phi(x) \coloneqq \int_x^\infty (2 \pi)^{-1/2} e^{ -z^2/2} \, dz$ denote the standard Gaussian tail and $q \coloneqq \bar\Phi^{-1}$ its inverse. A standard representation of order statistics \cite[Sec.~2.1]{DavidNagaraja} yields
\[
Z_{(M-i+1)} \overset{d}{=} q(U_{(i)}), \quad U_{(i)} \sim \mathrm{Beta}(i, M-i+1).
\]

Set $W_i \coloneqq \log (M U_{(i)})$ and $m = \log M$.  Standard properties of the Beta distribution give
\[
 \E\big[ W_i \big]  = H_{i-1} -  \gamma +  \mathcal{O}\left(1/M \right) , \qquad  \E\big[W_i^2\big] = \mathcal{O}(1), \]
as well as the tail bound \[
\P\left[ \left|  W_i  \right| >   m/2   \right]    = \mathcal{O}\left(M^{-1/2} \right),\] where the implied constants may depend on $i$. 

The Gaussian quantile has the expansion (see, e.g., \cite{GASULL2015376}) 
\begin{align}
q(e^{-v} ) = \sqrt{2 v} - \frac{\log(4\pi v)}{2\sqrt{2v}} + \mathcal O\!\left( \frac{(\log v)^2}{v^{3/2}}\right), \notag   
\end{align}
as $v \to \infty$. Expanding this expression at $m$, uniformly for $|w|\le m/2$, gives
\begin{align}
 q\left( e^{ - m +  w } \right)  
& = \psi(M) - \frac{\gamma + w}{\sqrt{2 m}}  + \mathcal O\!\left( \frac{w^2 + (\log m)^2}{m^{3/2}}\right). 
\label{eq:app_taylor}
\end{align}
Since $U_{(i)} = e^{- m + W_i}$, we can apply \eqref{eq:app_taylor} with $w = W_i$ whenever $|W_i| \le m/2$. On the complementary event, Cauchy--Schwarz and
\[
\E \big[ Z^2_{(M-i+1)}\big] \le \E \left[ \max_j \abs{Z_j}^2 \right]= \mathcal{O}(m) 
\]
show that its contribution is  $\mathcal O(M^{-1/4}\sqrt{m})= o(\rho_M)$. Since $\E[ W_i^2] = \mathcal{O}(1)$, taking expectations in \eqref{eq:app_taylor} therefore gives 
\begin{align*}
\E\big[ Z_{(M-i+1)} \big]
&= \psi(M) - \frac{\gamma + \E[W_i]}{\sqrt{2 m}} + \mathcal O\!\left(\rho_M \right)\\
& = \psi(M) - \frac{H_{i-1}}{\sqrt{2 \log M}} + \mathcal O(\rho_M),
\end{align*}
proving \eqref{eq:lem_topi}.

For the absolute maximum, note that $\P(\abs{Z_1} > t) = 2 \bar\Phi(t)$, which implies that $\max_{j} \abs{Z_j} \overset{d}{=} Q(U_{(1)}/2)$. Since $U_{(1)}/2 = e^{-\log (2M) + W_1}$ and $\gamma + \E[W_1] = \mathcal O(1/M)$, the same quantile expansion, now centered at $\log(2M)$, yields  \eqref{eq:lem_absmax}. 
\end{IEEEproof}

\section{Proofs of main results} \label{app:proofs_main_results}

\subsection{Proof of Theorem~\ref{thm:concentration}}

We address both distance-based and correlation-based decoding rules. Throughout we treat the source $x \in \bbR^n$ as fixed.  Recall that the residual process is
\[
r_l = x - A \sum_{k=1}^l b_k.
\]
where each codeword $b_l$ is the solution to one of the following optimization problems:
\begin{alignat*}{3}
&\min_{b \in \cB_l} \| r_{l-1} - A b\|&\quad &\text{(distance-based)}\\
&\max_{b \in \cB_l} \langle A^\top r_{l-1} ,  b \rangle &\quad &\text{(correlation-based)}
\end{alignat*}

The key idea is to rotate coordinates at each stage so that the residual $r_{l-1}$  always points along the fixed direction $e_1$. This allows us to express the residual norm at the current stage as a function of the residual norm at the previous stage and a fresh, independent copy of the design matrix. We then apply the concentration result for iterated-Lipschitz systems from Lemma~\ref{lem:iter_con}.

\begin{lemma}\label{lem:Al}
Let $P_l$ be the orthogonal projection onto the span of $\cB_l$.  Let $Q_{l-1} \in \mathbb{O}(n)$ be measurable with respect to  $r_{l-1}$ and satisfy $ Q^\top_{l-1} e_1  = r_{l-1}/\|r_{l-1}\|$. Let $A'_1, \dots, A'_L$ be i.i.d.\ copies of $A$ and define
\[
A_l = Q_{l-1} A P_l + A'_l (\Id - P_l)
\]
Then, the  matrices $A_1, \dots , A_L$  are i.i.d.\ with $A_l\overset{d}{=} A$. 
\end{lemma}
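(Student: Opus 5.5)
The plan is to reveal the randomness of $A$ one block at a time. The key observation is that, for an additive orthogonal code, the residual $r_{l-1}$, and hence $Q_{l-1}$, depends on $A$ only through $AP_1,\dots,AP_{l-1}$, while $AP_l$ is independent of these.

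\emph{Step 1 (independence of the blocks).} Since the sets $\cB_1,\dots,\cB_L$ are mutually orthogonal, so are their spans. Hence $P_kP_l=0$ for $k\neq l$, and $P_l(\Id-P_l)=0$. Each row $a$ of $A$ is $\normal(0,\Id_N)$, and for any orthogonal projections $P,P'$ with $PP'=0$ the Gaussian vectors $aP$ and $aP'$ have cross-covariance $PP'=0$, so they are independent. The rows of $A$ are i.i.d., so the matrices $AP_1,\dots,AP_L$ and $A(\Id-\sum_k P_k)$ are mutually independent. In particular $AP_l$ is independent of $(AP_1,\dots,AP_{l-1})$.

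\emph{Step 2 (measurability of $r_{l-1}$).} For $b\in\cB_k$ we have $b=P_kb$, so $Ab=(AP_k)b$. Consequently both selection rules at stage $k$, namely $\argmin_{b\in\cB_k}\|r_{k-1}-AP_kb\|$ and $\argmax_{b\in\cB_k}\langle r_{k-1},AP_kb\rangle$, depend only on $(r_{k-1},AP_k)$. Ties are broken by any fixed measurable rule. The update is $r_k=r_{k-1}-AP_kb_k$. By induction, $r_{l-1}$ is a measurable function of $(x,AP_1,\dots,AP_{l-1})$, and therefore so is $Q_{l-1}$.

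\emph{Step 3 (induction on $l$).} Let $\mathcal F_{l-1}$ be the $\sigma$-field generated by $x$, $AP_1,\dots,AP_{l-1}$ and $A'_1,\dots,A'_{l-1}$. Since $Q_{k-1}$ is determined by $(x,AP_1,\dots,AP_{k-1})$, each $A_k$ with $k<l$ is $\mathcal F_{l-1}$-measurable. It therefore suffices to show that, conditionally on $\mathcal F_{l-1}$, $A_l\overset{d}{=}A$; this gives both the independence and the common law.

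Conditionally on $\mathcal F_{l-1}$, the matrix $Q_{l-1}$ is a fixed orthogonal matrix. By Step 1, $AP_l$ is still distributed as $AP_l$ under this conditioning (and $A$ is independent of $x$ by Assumption~\ref{ass:source_coding}). The columns of $A$ are i.i.d.\ $\normal(0,\Id_n)$, so the law of $A$ is invariant under left multiplication by a fixed orthogonal matrix. Hence $Q_{l-1}AP_l\overset{d}{=}AP_l$ conditionally on $\mathcal F_{l-1}$. Separately, $A'_l(\Id-P_l)$ is independent of $\mathcal F_{l-1}$ and of $A$, and has the law of $A(\Id-P_l)$. Therefore $A_l$ is, conditionally, the sum of two independent terms with the laws of $AP_l$ and $A(\Id-P_l)$. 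By Step 1 with $P'=\Id-P_l$, these two pieces of $A$ are themselves independent, so the sum has the law of $AP_l+A(\Id-P_l)=A$.

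The main point requiring care is Step 3: one must condition on a $\sigma$-field rich enough to contain all earlier $A_k$, yet not containing $AP_l$. Choosing $\mathcal F_{l-1}$ as above achieves both.
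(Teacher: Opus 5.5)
Your proof is correct and follows the same route as the paper's: condition on the earlier blocks $AP_1,\dots,AP_{l-1}$, use orthogonality of the $P_k$ to get independence of $AP_l$, then apply left-rotational invariance and combine with the independent fresh piece $A'_l(\Id-P_l)$. Your version is somewhat more careful than the paper's sketch in two places. You justify explicitly that $r_{l-1}$ depends only on $(x,AP_1,\dots,AP_{l-1})$. You also enlarge the $\sigma$-field to include $A'_1,\dots,A'_{l-1}$, so that the earlier $A_k$ are measurable, which makes the mutual-independence step explicit where the paper leaves it implicit.
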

\begin{IEEEproof}
Let $\cF_{l-1} = \sigma(AP_1,\dots,AP_{l-1})$. Then $r_{l-1}$ and $Q_{l-1}$ are $\cF_{l-1}$-measurable. By orthogonality of $\{P_k\}$, $AP_l$ is independent of $\cF_{l-1}$. By left-rotational invariance of $AP_l$, $Q_{l-1} A P_l$ has the same distribution as $A P_l$ and is independent of $\cF_{l-1}$. Since $A'_l(\Id - P_l)$ is independent of everything else and has the same distribution as $A(\Id - P_l)$, the claim follows.
\end{IEEEproof}

Using Lemma~\ref{lem:Al},  the residual norm admits the recursion
\begin{align*}
 \|r_l\|  & = \| r_{l-1} - A b_l\| = \| Q_{l-1}( r_{l-1} - A b_l) \|\\
 & = \big\|  \|r_{l-1}\| e_1 -  A_l b_l\big \|.
\end{align*}
where $A_{l}$ has i.i.d.\ $\normal(0,1)$ entries and is independent of $(r_1, \dots, r_{l-1})$.

For distance-based decoding, 
\begin{align}
\|r_l\| = \min_{b \in \cB_l} \big\|  \|r_{l-1}\| e_1 -  A_l b\big \|. \label{eq:rl_decomp_dist}
\end{align}
This mapping is $1$-Lipschitz in $\|r_{l-1}\|$ and $c_l$-Lipschitz in $A_l$ (with respect to the Frobenius norm). Hence, by Lemma~\ref{lem:iter_con}, $\|r_l\|$ is sub-Gaussian with variance proxy $\sum_{k=1}^l c_k^2$.

For correlation-based decoding, let $a_{li}$  denote the $i$-th row of $A_l$. Then 
\[
b_l = \argmax_{b \in \cB_l} \langle a_{l1}, b \rangle
\]
depends only on the first row. 
Since $A_l$ has i.i.d.\ $\normal(0,1)$ entries, its rows are independent, and the projection of $b_l$ onto the remaining rows $a_{l2}, \dots, a_{ln}$ is independent of $a_{l1}$ with i.i.d.\ standard Gaussian entries: 
\[
z_l = \frac{1}{c_l} \big( \langle a_{l2}, b_l \rangle, \dots, \langle a_{ln}, b_l \rangle \big) \sim \normal(0, \Id_{n-1})
\]
Then
\begin{align}
\|r_{l}\| 
&= \sqrt{ (\|r_{l-1}\|- \langle a_{l1}, b_l \rangle)^2 + \sum_{i=2}^n \langle a_{li}, b_l \rangle^2 } \notag \\
& = \sqrt{
\big(\|r_{l-1}\|- \max_{b \in \cB_l} \langle a_{l1}, b \rangle \big)^2
+ c_l^2 \|z_l\|^2 }, \label{eq:rl_decomp_corr}
\end{align}
where $\|r_{l-1}\|$, $a_{l1}$, and $z_l$ are mutually independent.  This mapping is $1$-Lipschitz in $\|r_{l-1}\|$ and $c_l$-Lipschitz in Gaussian variables $(a_{l1}, z_l)$. The result follows from Lemma~\ref{lem:iter_con}.

\subsection{Proof of Theorem~\ref{thm:D_corr}}

Throughout this proof we condition on the event $\|x\| = \sqrt{n} \sigma$. Squaring both sides of \eqref{eq:rl_decomp_corr} and then taking the expectation (recalling that $r_{l-1}$ and $a_{l1}$ are independent) yields
\begin{align}
\E[ \|r_{l}\|^2 ]
& =  \E[\|r_{l-1}\|^2] + \E\big[  \big(\max_{b \in \cB_l}\langle a_{l1}, b \rangle  \big)^2 \big] + c_l^2 (n-1) \notag \\
& \quad 
- 2 \E[ \|r_{l-1}\|] \E\big[  \max_{b \in \cB_l}\langle a_{l1}, b \rangle \big], \label{eq:rl2_decomp}
\end{align}
where $a_{l1} \sim \normal(0, \Id_N)$.  From the definition of the Gaussian width in \eqref{eq:omega_l_def},
\begin{align}
\E[  \max_{b \in \cB_l} \langle a_{l1}, b \rangle ]  &= c_l \omega_l, \label{eq:Ecorr} 
\end{align}
Using $D_l = \frac{1}{n} \E[\|r_l\|]^2 = \frac{1}{n}  \E[\|r_l\|^2] -\frac{1}{n} \var(\|r_l\|)$ then leads to
\begin{align*}
D_l & = \Big(  \sqrt{D_{l-1}}  -  \frac{ c_l \omega_l}{\sqrt{n}} \Big)^2+ c_l^2 +  \frac{ \eta_l}{n}
\end{align*}
where
\begin{align*}
\eta_l & =\var(\|r_{l-1}\|) -\var(\|r_l\|) + \var\big(\max_{b \in \cB_l}\langle a_{l1}, b \rangle  \big) - c_l^2.
\end{align*}
By Theorem~\ref{thm:concentration},  $\var(\|r_l\|) \le \sum_{k=1}^l c_k^2$. Also, since $\max_{b \in \cB_l}\langle a_{l1}, b \rangle$ is a $c_l$-Lipschitz function of $a_{l1}$, Gaussian concentration (Lemma~\ref{lem:sub-Gaussian} and Lemma~\ref{lem:Gaussian_concentration}) yields
\begin{align}
\var\big(\max_{b \in \cB_l}\langle a_{l1}, b \rangle\big)  \le c_l^2.  \label{eq:corr_var}
\end{align}
Thus $|\eta_l| \le 2 \sum_{k=1}^l c_k^2$. \IEEEQED

\subsection{Power allocation under correlation-based coding} 
\label{app:power_alloc_corr}

Expanding the square in \eqref{eq:Delta_l_corr} yields
\begin{align*}
\Delta_l = \Delta_{l-1} - \frac{2\sqrt{\Delta_{l-1}}\,\omega_l}{\sqrt{n}}\, c_l + \left(1 + \frac{\omega_l^2}{n}\right) c_l^2.
\end{align*}
The right-hand side is a quadratic in $c_l$ with positive leading coefficient, so it is minimized at
\begin{align}
c_l = \frac{\sqrt{n}\,\omega_l}{n+\omega_l^2}\sqrt{\Delta_{l-1}}, \label{eq:app_cl_vertex}
\end{align}
which gives the minimal value
\begin{align*}
\Delta_l &= \Delta_{l-1} - \frac{n \omega_l^2}{(n+\omega_l^2)^2}\Delta_{l-1} \cdot \frac{n+\omega_l^2}{n} \\
& = \Delta_{l-1}\left(1 - \frac{\omega_l^2}{n+\omega_l^2}\right) = \frac{n}{n+\omega_l^2}\,\Delta_{l-1}.
\end{align*}
Iterating from $\Delta_0 = \sigma^2$ gives \eqref{eq:Delta_corr_opt}. Substituting \eqref{eq:Delta_corr_opt} into \eqref{eq:app_cl_vertex} and squaring gives
\begin{align*}
c_l^2&  = \frac{n\,\omega_l^2}{(n+\omega_l^2)^2}\,\Delta_{l-1} = \frac{\sigma^2 \omega_l^2}{n+\omega_l^2}\cdot\frac{n}{n+\omega_l^2}\prod_{k=1}^{l-1}\frac{n}{n+\omega_k^2} \\
& = \frac{\sigma^2\omega_l^2}{n+\omega_l^2}\prod_{k=1}^{l}\frac{n}{n+\omega_k^2},
\end{align*}
which is \eqref{eq:c_corr}.

Finally, we verify that \eqref{eq:c_corr} satisfies the total power constraint \eqref{eq:ckUB}. Comparing \eqref{eq:c_corr} with $\Delta^\mathrm{corr}_{l-1} - \Delta^\mathrm{corr}_{l} = \frac{\omega_l^2}{n + \omega_l^2}\Delta^\mathrm{corr}_{l-1}$ gives
\begin{align*}
c_l^2 = \frac{n}{n+\omega_l^2}\big( \Delta^\mathrm{corr}_{l-1} - \Delta^\mathrm{corr}_{l}\big) \le \Delta^\mathrm{corr}_{l-1} - \Delta^\mathrm{corr}_{l},
\end{align*}
and summing the telescoping series yields
\[\sum_{l=1}^L c_l^2 \le \sigma^2 - \Delta^\mathrm{corr}_L \le \sigma^2.
\]

\subsection{Proof of Theorem~\ref{thm:D_corr_opt}}

Throughout this proof we condition on the event $\|x\| = \sqrt{n} \sigma$. We begin with the lower bound. By \eqref{eq:rl_decomp_corr}, convexity of the map $(x,y) \mapsto \sqrt{ x^2 + y^2}$, and Jensen's inequality,
\begin{align*}
\E[ \|r_{l}\| ]
& \ge  \sqrt{
\big(\E[ \|r_{l-1}\|] - \E\big[  \max_{b \in \cB_l} \langle a_{l1}, b \rangle \big] \big)^2
+ c_l^2 \E[ \|z_l\|]^2 }, 
\end{align*}
where $a_{l1} \sim \normal(0, \Id_N)$ and $z_l \sim \normal(0, \Id_{n-1})$. 
Since $\|z_l\|$ is a $1$-Lipschitz function of $z_l$, its variance is no greater than $1$ by the Gaussian concentration  (Lemma~\ref{lem:sub-Gaussian} and Lemma~\ref{lem:Gaussian_concentration}), and thus
\begin{align*}
  \E[\|z_l\|]^2 \ge  \E[\|z_l\|^2]  - 1=  n-2.
\end{align*}
Using this inequality along with  \eqref{eq:Ecorr}, we can write  
\begin{align*}
\E[ \|r_{l}\| ]^2
& \ge 
\big(\E[ \|r_{l-1}\|] - c_l \omega_l \big)^2
+ c_l^2 (n-2)\\
& \ge   \inf_{c \in \bbR_+} \{
\big(\E[ \|r_{l-1}\|] - c \omega_l \big)^2
+ c^2 (n-2)\}\\
& =  \Big(\frac{n - 2}{n -2 + \omega_l^2}  \Big) \E[ \|r_{l-1}\|]^2 
\end{align*}
Iterating this inequality and  recalling that $\|r_0\|= \sqrt{n} \sigma$ (by assumption) establishes  the  lower bound
\begin{align}
\E[ \|r_{l}\| ]^2
& \ge n \sigma^2 \prod_{k=1}^l   \frac{n-2}{n -2 + \omega_k^2}, \label{eq:rlLB}
\end{align}
which is valid for any power allocation. To express the bound in terms of $D_l$ and $\Delta^\mathrm{corr}_l$, we divide by $n$ and write
\begin{align}
D_l = \frac{1}{n}\E[\|r_{l}\|]^2
& \ge    \Delta_l^\mathrm{corr} (1- \eps_l) \label{eq:DlLB}
\end{align}
where 
\begin{align*}
\eps_{l} 
& \coloneqq  1-  \prod_{k=1}^{l} \frac{ (n-2 ) (n+ \omega_k^2)}{n (n-2+\omega_k^2)} \le  \frac{2}{n} \sum_{k=1}^{l} \frac{ \omega_k^2}{  n - 2 + \omega_k^2}.
\end{align*}

Next, we consider the upper bound. Combining \eqref{eq:rl2_decomp}, \eqref{eq:Ecorr}, and \eqref{eq:corr_var} gives:
\begin{align*}
\E[ \|r_{l}\|^2 ]& \le  \E[\|r_{l-1}\|^2] + c_l^2 (n+ \omega_l^2) - 2 c_l \omega_l \E[\|r_{l-1}\|].
\end{align*}
Applying the lower bound on $\E[\|r_{l-1}\|]$ in \eqref{eq:DlLB} along with the basic inequality $\sqrt{1 -x}  \ge 1- x$ for all $0 \le x \le 1$ yields
\begin{align}
\E[ \|r_{l}\|^2 ]&\ \le  \E[\|r_{l-1}\|^2] + c_l^2 (n+ \omega_l^2)  \notag \\
&\quad - 2 c_l \omega_l \sqrt{n \Delta^\mathrm{corr}_{l-1}} (1- \eps_{l-1})   . \notag
\end{align}
Under the power allocation in  \eqref{eq:c_corr}, this inequality becomes
\begin{align*}
\E[ \|r_{l}\|^2 ]
& \le  \E[\|r_{l-1}\|^2] -   \frac{n \omega_l^2}{n + \omega_l^2}  \Delta^\mathrm{corr}_{l-1} (1 -2  \eps_{l-1}).
\end{align*} 
Iterating this bound, and then using the identity $ \omega_l^2 \Delta^\mathrm{corr}_{l-1} = ( \Delta^\mathrm{corr}_{l-1} - \Delta^\mathrm{corr}_l) (n + \omega_l^2)$ along with $\|r_0\|^2 = n \sigma^2$ and  $\Delta_0^\mathrm{corr} = \sigma^2$ leads to 
\begin{align*}
\E[ \|r_{l}\|^2 ]  & \le \E[ \|r_0\|^2 ]  -   \sum_{k=1}^{l}   \frac{n \omega_k^2}{n + \omega_k^2}  \Delta^\mathrm{corr}_{k-1} (1 - 2 \eps_{k-1})\\
& =  n   \Delta^\mathrm{corr}_l + 2 n  \sum_{k=1}^{l}   (\Delta^\mathrm{corr}_{k-1} -  \Delta^\mathrm{corr}_{k}) \eps_{k-1}.
\end{align*}
Furthermore, since  $\Delta^\mathrm{corr}_{k-1} \ge  \Delta^\mathrm{corr}_{k}$ and $\eps_k$ is increasing, the summation satisfies
\begin{align*}
\sum_{k=1}^{l}     (\Delta^\mathrm{corr}_{k-1} -  \Delta^\mathrm{corr}_{k}) \eps_{k-1} \le ( \sigma^2 - \Delta^\mathrm{corr}_l) \eps_{l-1}
\end{align*}

To relate these bounds back to the deterministic prediction $D_l$ under the same power allocation, we can write  
\begin{align*}
D_l  & =  \frac{1}{n} \E[\|r_l\|]^2 \le  \frac{1}{n} \E[\|r_l\|^2] \le  \Delta^\mathrm{corr}_l + 2 ( \sigma^2 - \Delta^\mathrm{corr}_l) \eps_{l-1}.  
\end{align*}
This concludes the proof. \IEEEQED

\subsection{Proof of Proposition~\ref{prop:Gaussian_width}}

In a standard SPARC, the coefficient set $\cB_l$ has $M$ vectors, each with a single nonzero entry in a different location. Therefore, in \eqref{eq:omega_l_def} the set of random variables $\{ \langle z, b \rangle / \|b\| \}_{b \in \cB_l}$ are $M$ i.i.d.\ standard normals, so $\omega_{l,1} = \E[Z_{(M)}]$. The only difference for the signed SPARC is that for each $b \in \cB_l$, $-b$ is also in $\cB_l$, hence the maximum is over the absolute values of $M$ independent standard Gaussians.

For the  $K$-sparse SPARC, each coefficient vector in $\cB_l$ has $K$ nonzeros in section $l$, each with value $c_l/\sqrt K$, and $\cB_l$ ranges over all size-$K$ subsets of the $M$ coordinates. Since the $K$ nonzero coefficients are equal and positive, the sum $\sum_{i \in S} Z_i$ over a size-$K$ subset $S$ is maximized by taking the $K$ largest $Z_i$'s, so
\begin{align*}
\omega_{l,K} &= \frac{1}{\sqrt K}\,\E\Big[\max_{1 \le i_1 < \cdots < i_K \le M} (Z_{i_1} + \cdots + Z_{i_K})\Big] \\
& = \frac{1}{\sqrt K}\, \E\big[ Z_{(M)} + \cdots + Z_{(M-K+1)}\big].
\end{align*}
For each fixed $i \ge 1$, Lemma~\ref{lem:order_stat_mean} gives
\begin{equation}
\E\big[Z_{(M-i+1)}\big] = \psi(M) - \frac{H_{i-1}}{\sqrt{2 \log M}} + \mathcal O(\rho_M). \label{eq:app_orderstat}
\end{equation}
The case $i=1$, where $H_0 = 0$, is \eqref{eq:w_sparc_asymp}. Summing \eqref{eq:app_orderstat} over $i = 1, \dots, K$ and using $\sum_{i=1}^{K} H_{i-1} = K H_{K-1} - (K-1)$ gives
\begin{align*}
\omega_{l,K} &= \frac{1}{\sqrt K} \sum_{i=1}^K \E\big[Z_{(M-i+1)}\big] \\
&= \sqrt K\, \psi(M) - \frac{K H_{K-1} - K + 1}{\sqrt{2 K \log M}} + \mathcal O(\rho_M),
\end{align*}
which is \eqref{eq:w_2sparc_asymp}. The corresponding expansion for the signed SPARC in \eqref{eq:w_sparc_pm_asymp} follows directly from \eqref{eq:lem_absmax}. 
\IEEEQED

\subsection{Proof of Theorem~\ref{thm:D_dist}}
From  \eqref{eq:rl_decomp_dist}, we see that
\begin{align*}
\E\big[ \|r_l\| \mid \|r_{l-1}\| \big] & = g_l( \|r_{l-1}\|)
\end{align*}
where
\begin{align*}
g_l(u) & \coloneqq  \E\Big[\min_{b \in \cB_l} \big\|  u e_1 -  A_l b\big \|\Big]  = u \gamma_l\Big(n, \frac{c_l}{u} \Big)
\end{align*}
Since $g_l(\cdot)$ is $1$-Lipschitz continuous,
 \begin{align*}
\MoveEqLeft \abs{\E[\|r_l\|] - g_l(\E[\|r_{l-1}\|]) } \\
& = \abs{\E[g_l(\|r_{l-1}\|)] - g_l(\E[\|r_{l-1}\|]) }\\
& \le \E\big[ \abs{\|r_{l-1}\| - \E[\|r_{l-1}\|] } \big]\\
& \le \sqrt{ \var(\|r_{l-1}\|)}.
\end{align*}
Thus
\begin{align*}
\sqrt{D_l} & = \frac{1}{\sqrt{n}} \E[  g_l( \|r_{l-1}\|)] \\
& = \frac{1}{\sqrt{n}} g_l(\sqrt{n D_{l-1}}) + \frac{\eta_l}{\sqrt{n}}\\
& = \sqrt{D_{l-1}} \gamma_l\Big(n, \frac{c_l}{\sqrt{n D_{l-1}}}\Big) + \frac{\eta_l}{\sqrt{n}}
\end{align*}
where $|\eta_l| \le \sqrt{\var(\|r_{l-1} \|)} \le (\sum_{k=1}^{l-1} c_k^2)^{1/2}$. The last inequality follows from Lemma \ref{lem:iter_con}.

\subsection{Proof of Proposition~\ref{prop:gamma_expressions}}
Each expression follows by substituting the coefficient set $\cB_l$ of the corresponding SPARC variant into the definition of $\gamma_l(n,\lambda)$ in \eqref{eq:gamma_l_def}, and then taking the infimum over $\lambda$. For the standard SPARC, $\cB_l = \{c_l e_j\}_{j=1}^M$ with $\|b\|=c_l$ for every $b \in \cB_l$. Writing $z_j \coloneqq A e_j$, so that $z_1,\dots,z_M \sim_{\text{i.i.d.}} \normal(0,\Id_n)$, \eqref{eq:gamma_l_def} becomes $\gamma_l(n,\lambda) = \E[\min_{1\le j \le M} \|e_1 - \lambda z_j\|]$. For the signed SPARC, $\cB_l = \{\pm c_l e_j\}_{j=1}^M$, so the minimization in \eqref{eq:gamma_l_def} additionally ranges over the sign of each $z_j$, giving the $\min\{\|e_1-\lambda z_j\|, \|e_1+\lambda z_j\|\}$ term. For the $K$-sparse SPARC, $\cB_l = \big\{ \frac{c_l}{\sqrt K}\sum_{i \in S} e_i : S \subset [M], |S|=K \big\}$, so $b/\|b\|$ ranges over $\frac{1}{\sqrt K}\sum_{i\in S} e_i$ for size-$K$ subsets $S$, and writing $z_i \coloneqq Ae_i$ gives the stated expression for $\bar\gamma_{l,K}(n)$. \IEEEQED

\subsection{Proof of Lemma~\ref{lem:gammabar_UB}}

Recall from Proposition \ref{prop:gamma_expressions} that for the standard SPARC with parameters $(n,L,M)$, we have
\begin{align*}
\bar{\gamma}_{l,1}(n) &= \inf_{\lambda \in \bbR^+} \E\left[  \min_{ 1 \le j \le M} \| e_1 - \lambda z_j\| \right],
\end{align*}
where $z_1, \dots, z_M$ are i.i.d.\ $\normal(0, \Id_n)$.

Let $j^* = \argmax_{1 \le j \le M} \langle e_1, z_j \rangle $. Then, for every $\lambda \in \bbR^+$, 
\begin{align}
 & \min_{ 1 \le j \le M}  \| e_1 - \lambda z_j\|^2  \le   \| e_1 - \lambda z_{j^*}\|^2 \notag \\
 & = 1 - 2 \lambda  \langle e_1, z_{j^*} \rangle + \lambda^2  \langle e_1, z_{j^*} \rangle^2   + \lambda^2  \sum_{i=2}^n \langle e_i, z_{j^*} \rangle^2 . \label{eq:minjtojstar}
\end{align}

We now bound each term. By definition of the Gaussian width, $ \E[\langle e_1, z_{j^*} \rangle] = \omega_{l,1}$. 
Moreover, 
\[
\E[\langle e_1, z_{j^*} \rangle^2]
= \omega_{l,1}^2 + \var(\langle e_1, z_{j^*} \rangle)
\le \omega_{l,1}^2 + 1,
\]
where the variance bound follows from the Gaussian Poincar\'e inequality \cite[Theorem~3.21]{boucheron2013concentration} since
$\max_{1\le j\le M} \langle e_1, z_j\rangle$ is $1$-Lipschitz as a function of
$(\langle e_1,z_j\rangle)_{j=1}^M$.

Finally, since $j^*$ depends only on the first coordinate,
the variables $(\langle e_i, z_{j^*} \rangle)_{i=2}^n$ are i.i.d.\ $\normal(0,1)$, and hence
\[
\sum_{i=2}^n \E[\langle e_i, z_{j^*} \rangle^2] = n-1.
\]

Taking expectations in \eqref{eq:minjtojstar} and using Jensen's inequality,
\begin{align*}
& \E \left[ \min_{ 1 \le j \le M}  \| e_1 - \lambda z_j\| \right]^2  \le  \E \left[ \min_{ 1 \le j \le M}  \| e_1 - \lambda z_j\|^2 \right]\\
 & \quad    \le  1 - 2 \lambda \omega_{l,1} + \lambda^2 \omega_{l,1}^2 +  \lambda^2 n  
\end{align*}
Minimizing over $\lambda>0$ gives  $\lambda^* = \omega_{l,1}/( n + \omega_{l,1}^2)$, which leads to the stated inequality. \IEEEQED

\fi

\end{document}